\documentclass[12pt,reqno]{amsart}
\usepackage{lmodern}
\usepackage[T1]{fontenc}

\usepackage[numbers]{natbib}
\usepackage{amsmath,amssymb,latexsym} 
\usepackage{fullpage}
\usepackage{etoolbox}
\usepackage{enumitem}
\usepackage{color}
\usepackage[colorlinks,linkcolor=blue,citecolor=magenta]{hyperref}
\usepackage[colorinlistoftodos,bordercolor=orange,backgroundcolor=orange!20,linecolor=orange,textsize=scriptsize]{todonotes}
\usepackage{soul}
\usepackage{microtype}
\usepackage{yhmath}
\usepackage{float}
\usepackage{algorithm}
\usepackage[noend]{algpseudocode}
\usepackage{bbold}
\usepackage{booktabs}
\usepackage{comment}
\usepackage{mathtools}
\usepackage{calc}
\usepackage{float}
\usepackage{pstricks}
\usepackage{pstricks-add}
\usepackage[flushleft]{threeparttable}
\usepackage{rotating}
\usepackage{graphicx} 
\usepackage{pgfplots}
\usepackage{caption}
\usepackage{multirow}

\usepackage{geometry}
\newtheorem{theorem}{Theorem}
\newtheorem{corollary}{Corollary}

\theoremstyle{definition}

\DeclareFontFamily{U}{mathx}{}
\DeclareFontShape{U}{mathx}{m}{n}{ <-> mathx10 }{}
\DeclareSymbolFont{mathx}{U}{mathx}{m}{n}
\DeclareFontSubstitution{U}{mathx}{m}{n}
\DeclareMathAccent{\widecheck}{0}{mathx}{"71}

\newlength{\LPlhbox}

\newcommand{\mFA}{\text{F1}}
\newcommand{\mFB}{\text{F2}}
\newcommand{\mFC}{\text{F3}}
\newcommand{\mFD}{\text{F4}}

\title{Minimizing the makespan in job shop scheduling under conflict graph constraints}

\author{Nour ElHouda Tellache}
\address{N.E.H. Tellache, Decision Support \& Operations Research Group, Department of Informatics, University of Fribourg, Fribourg, Switzerland}
\email{nourelhouda.tellache@unifr.ch}

\author{Abdenour Azerine}
\address{A. Azerine, Université de Haute-Alsace F-68100 Mulhouse, France}
\email{abdennour.azerine@uha.fr}

\keywords{Job shop; conflict graph; makespan; complexity theory; lower bounds; mixed-integer linear programming; genetic algorithms}

\begin{document}
	
\maketitle
	
\begin{abstract}
We study the job shop scheduling problem with a conflict graph (JSC), in which adjacent jobs in the conflict graph cannot be processed simultaneously on different machines, with the objective of minimizing the makespan. The problem models settings where jobs share additional resources while retaining their individual machine routings. We first investigate its computational complexity and establish a polynomial equivalence between JSC and a variant of the resource-constrained job shop problem with unit-capacity resources. Although the general problem on two machines is NP-hard, we identify a polynomially solvable special case. For the general problem, we develop precedence-based and time-indexed mixed-integer linear formulations, along with lower bounds on the makespan. We also propose a genetic algorithm using permutation-with-repetition encoding and active, non-delay, and hybrid schedule evaluation procedures. Computational experiments on instances derived from the Lawrence and Taillard benchmarks, as well as randomly generated generalized job shop instances, are conducted to evaluate the performance of the proposed formulations, lower bounds, and genetic algorithm.
\end{abstract}

\section{Introduction}
The job shop scheduling problem is a fundamental problem in machine scheduling and ranks among the hardest combinatorial optimization problems. It involves assigning jobs to machines, where each job consists of an ordered sequence of operations, each requiring a specific machine. The goal is to optimize one or more objective functions while adhering to various constraints~\cite{MP08}. This problem has numerous applications in industries such as semiconductor manufacturing, machine manufacturing, automobile production, supply chains, and metallurgy (see~\cite{xiong2022survey} for more details). Each application imposes different requirements, which translate into specific scheduling constraints. In this paper, we focus on a particular class of constraints known as conflict constraints, which naturally arise when certain jobs cannot be processed simultaneously—for instance, when they share resources. We study the problem where these constraints are represented by a simple undirected graph, referred to as the conflict graph.

 Formally, the Job Shop scheduling problem with Conflict graph (JSC) consists of scheduling a set of jobs $J=\{J_1,\ldots,J_n\}$ on $m$ dedicated machines $M=\{M_1,\ldots,M_m\}$. Each job $J_j$ comprises $m_j$ operations $O_{ij}$ ($i=1,\ldots,m_j$), such that $O_{ij}$ has to be processed on a specified machine $\mu_{ij} \in M$ for $p_{ij} \in \mathbb{N}^*$ time units. Each job $J_j$ has a predetermined processing order through the machines $O_{1j} \rightarrow O_{2j} \rightarrow \ldots \rightarrow O_{m_jj}$. We assume that any two consecutive operations of the same job are to be performed on different machines. Furthermore, each machine can handle at most one operation at a time and each job is processed on at most one machine at a time. The processing of the jobs is subject to conflict constraints given by a simple undirected graph $G=(V,E)$ called the conflict graph. Each vertex represents a job and two jobs that are adjacent in $G$ are in conflict, i.e. they cannot be processed at the same time on different machines. Likewise, we define conflicts between two operations: two operations are in conflict if they belong to the same job, require the same machine, or are of two conflicting jobs. The objective is to find a feasible schedule that minimizes the maximum completion time also called the makespan and denoted $C_{max}$. According to the three field classification $ \alpha|\beta|\gamma $ of  \citet{GLLR79}, we denote our scheduling problem by $ Jm|ConfG = (V,E)|C_{max} $, where $ ConfG = (V,E) $ indicates the presence of a conflict graph $ G = (V,E) $ over the jobs.
 
 The JSC problem is related to the Job Shop problem under Resource Constraints (JSRC), where the jobs may require, besides the machines, some additional resources for their processing (see \citet{JWRJ86}). In such problems, conflicts arise when resource requirements exceed available capacities. These conflicts can occur at the operation level when different operations within a job have varying resource requirements. Alternatively, if all operations within a job have the same resource requirements, conflicts can be represented at the job level instead. Generally, conflicts in the JSRC problem are modeled using a conflict hypergraph. However, in some particular cases, a simple graph suffices, such as when only two machines are available or when each resource is limited to a single unit (see Section~\ref{sec:complexity}).
 
 \citet{blazewicz1983} expanded the three field classification, $ \alpha|\beta|\gamma $, to cover the resource-constrained scheduling problems. Parameter $\beta$ denotes the job and resource characteristics. The presence of additional resources is specified by $res\lambda\sigma\delta $, where $\lambda$, $\sigma$, $\delta$ $\in \{.,k\}$ represent respectively the number of resource types, resource availabilities and resource requirements. A dot ``.'' indicates that the corresponding parameter can take any integer value, whereas a positive integer $k$ indicates that the number of resource types is equal to $k$, each resource is available in the amount of $k$ units and the resource requirements of each operation are equal to at most $k$ units. We denote by $res^t\lambda\sigma\delta$ the special case where all operations within a job have the same requirements of each resource, and let \( R_s(J_j) \) be the resource requirement of job \( J_j \) for resource \( R_s \). Note that the scheduling problem $\alpha|res^t\lambda\sigma\delta|\gamma$ is a subproblem of $\alpha|res\lambda\sigma\delta|\gamma$.

The remainder of this paper is structured as follows. Section~\ref{sec:literature} reviews the related literature. Section~\ref{sec:complexity} discusses the complexity of the JSC problem. Section~\ref{sec:MILP} presents mathematical models, followed by lower bounds in Section~\ref{sec:LB}. Section~\ref{sec:GA} describes genetic algorithms. The computational experiments are detailed in Section~\ref{sec:CE}.
 \section{Literature review}\label{sec:literature}
 
The basic job shop scheduling problem is among the most intractable machine scheduling problems, with only a few special cases that are solvable in polynomial time. The two-machine case can be solved in polynomial time when $m_j \leq 2$ ($J2|m_j \leq 2|C_{max}$)~\cite{jackson1956extension} by a modification of Johnson's algorithm for problem $F2||C_{max}$~\cite{johnson1954optimal}. Other cases that are tractable include $J2|r_j, p_{ij}=1|C_{max}$~\cite{kubiak1995efficient}, $J2|n=k | C_{max}$~\cite{brucker1994polynomial}, $J|n=2|C_{max}$~\cite{brucker1988efficient, akers1956graphical}. Minor modifications to these cases render the problem NP-hard. In particular, the problems $J2|m_j \leq 3|C_{max}$~\cite{lenstra1977complexity}, $J3|m_j \leq 2|C_{max}$~\cite{lenstra1977complexity}, $J2|p_{ij} \in \{1,2\}|C_{max}$~\cite{lenstra1979computational},  $J3|p_{ij}=1|C_{max}$~\cite{lenstra1979computational}, and  $J3|n = 3| C_{max}$~\cite{sotskov1995np} are NP-hard. When resource constraints are introduced, the simplest job shop variant, $J2|res111, p_{ij} = 1|C_{max}$, is already NP-hard in the strong sense~\cite{blazewicz1983}. 
 

Scheduling with conflict graphs has been studied for parallel machines--specifically in the case of identical machines~\cite{mohabeddine2019new, tellache2025scheduling}--as well as for dedicated machines. In the latter case, most studies focus on open shop and flow shop problems with conflict graphs. To the best of our knowledge, the job shop problem with conflict graphs has not yet been addressed. In the following, we focus on the open shop and flow shop cases, as they are closely related to the job shop problem.

The Open Shop with Conflict graph (OSC) problem with makespan minimization is strongly NP-hard for $m \geq 2$ \cite{TellacheBoudharDAM2017}. Various complexity results have been established for different problem settings. \citet{TellacheBoudharDAM2017} showed that the two-machine OSC problem is strongly NP-hard when $p_{ij} \in \{1,2,3\}$ and the conflict graph $G$ is the complement of a bipartite graph. However, when restricted to $p_{ij} \in \{0,1,2\}$, the problem becomes polynomially solvable for arbitrary conflict graphs. Moreover, preemption makes the two-machine problem tractable regardless of processing times and conflict structures. For unit-time operations, the problem remains polynomially solvable when increasing the number of machines to three for conflict graphs that are complements of triangle-free graphs, but becomes NP-hard for arbitrary graphs. For the resolution methods, the authors presented heuristic approaches and lower bounds for the general $m$-machine OSC problem. Later, \citet{TELLACHE2019154} addressed the proportionate two-machine OSC problem and proved that it is strongly NP-hard when processing times take two distinct values in  $ \{a, 2a+b\} $, ($ a \geq 1 $ and ($ b \geq 1 $ or $ -a<b<0 $)), for general conflict graphs. This result closes the complexity status of the two-machine OSC problem with two values of processing times. Further work by \citet{TELLACHE202185} demonstrated that the two-machine OSC problem remains NP-hard when the conflict graph is a complement of a tree, but is polynomially solvable for complements of caterpillars and cycles. Additionally, the author studied the OSC problem with release times, proving that the two-machine case is strongly NP-hard when release times belong to $\{0, r\}$ (for arbitrary $r$) and processing times are in $\{1,2\}$, for general conflict graphs. However, when the conflict graph is bipartite and $p_{ij} = 1$, the problem can be solved in polynomial time. More recently, \citet{tellache2023genetic} introduced a hybrid genetic algorithm with mathematical models and lower bounds. Their approach outperformed the two-phase heuristic approach of~\cite{TellacheBoudharDAM2017}, optimally solving 93.49\% of instances, with an average deviation of 0.475\% from the best-known lower bounds. 

The Flow Shop with Conflict Graph (FSC) problem is more difficult than the OSC problem, with only a few cases solvable in polynomial time. \citet{tellache2018ANOR} proved that the two-machine FSC problem is strongly NP-hard, even when restricted to complements of complete split graphs and complements of complete bipartite graphs. This latter case remains true even if preemption is allowed. The authors also studied the case of unit-time operations and showed that it is strongly NP-hard, even for two machines, as it is polynomially equivalent to the Minimum Path Cover (MPC) problem. Consequently, the problem is NP-hard for graph classes where the MPC problem is NP-hard and solvable in polynomial time for graph classes where the MPC problem is tractable. In the special case of unit-time jobs (i.e., jobs with a single operation), the problem can be solved in $O(n^{2.5})$ time on two machines. The same study also introduced heuristic approaches and lower bounds for the general $m$-machine FSC problem. \citet{tellache2017two} further explored the two-machine FSC problem with unit-time operations, developing mathematical models and a branch-and-bound algorithm. For the same problem, \citet{cai2018approximation} proposed a $\frac{4}{3}$-approximation algorithm for arbitrary conflict graphs and a $\frac{3}{2}$-approximation algorithm for a conflict graph that is a complement of a complete bipartite graph. The two-machine FSC problem was also examined in \citet{tellache2016two}, where the authors demonstrated that permutation schedules are not dominant, even for two machines. However, they identified a special case where there exists an optimal schedule that is a permutation schedule. Finally, \citet{TELLACHE202185} showed that the two-machine FSC problem remains NP-hard even when restricted to complements of trees, even with preemption allowed.

\section{Complexity analysis} \label{sec:complexity}
As discussed in Section~\ref{sec:literature}, the basic job shop scheduling problem is strongly NP-hard even with two machines (i.e., $J2||C_{max}$), which implies that the JSC problem is also NP-hard on two machines with arbitrary conflict graphs (i.e., $J2|ConfG=(V,E)|C_{max}$). The following theorem establishes the relation between the JSC and JSRC problems. While its proof shares similarities with those in~\cite{TellacheBoudharDAM2017} and~\cite{tellache2018ANOR}, we include it here for completeness.
\begin{theorem} \label{thm:JSCJSRC}
 The JSC problem is polynomially equivalent to $J|res^t.11|C_{max}$.
\end{theorem}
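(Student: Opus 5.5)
We prove the equivalence with two polynomial reductions that keep the jobs, machines, routings and processing times unchanged and alter only how conflicts are encoded. Because the schedule set is unchanged, feasible schedules correspond one-to-one and makespans are preserved. Throughout, $J|res^t.11|C_{max}$ means an arbitrary number of resource types, each with availability $1$ and requirements in $\{0,1\}$, and requirements constant across the operations of a job. So job $J_j$ is characterized by the set $S_j=\{R_s : R_s(J_j)=1\}$. The key observation is that two operations of different jobs $J_j,J_k$ cannot run simultaneously exactly when $S_j\cap S_k\neq\emptyset$, because each resource has only one unit.

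\emph{From JSC to $J|res^t.11|C_{max}$.} Given a conflict graph $G=(V,E)$, introduce one resource $R_e$ for each edge $e=\{J_j,J_k\}\in E$, with availability $1$. Set $R_e(J_j)=R_e(J_k)=1$ and $R_e(J_l)=0$ for every other job, and use the same requirement for every operation of the job. This creates $|E|=O(n^2)$ resources, so the reduction is polynomial.

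Take two operations of distinct jobs $J_j,J_k$ processed at the same time on different machines. They violate some resource capacity if and only if $S_j\cap S_k\neq\emptyset$. Since $S_j$ consists exactly of the edges incident to $J_j$, this intersection is nonempty if and only if $\{J_j,J_k\}\in E$. The machine and job constraints are identical in both problems. Hence the two feasible schedule sets coincide, and so do the makespans.

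\emph{From $J|res^t.11|C_{max}$ to JSC.} Given the requirements, build $G$ on the jobs by joining $J_j$ and $J_k$ whenever there is a resource $R_s$ with $R_s(J_j)=R_s(J_k)=1$. This takes $O(n^2\cdot \#\text{resources})$ time. In the resource instance, concurrent processing of $J_j$ and $J_k$ is infeasible if and only if some shared resource would be demanded twice, which happens if and only if $J_j$ and $J_k$ are adjacent in $G$.

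A job with requirement $0$ for everything gets no edges, which is consistent. A job conflicting with itself is already excluded by the job constraint. Because all capacities are $1$, a violation can never require three or more jobs simultaneously, so pairwise conflicts capture every capacity violation. This is the only place the assumption $\sigma=1$ matters, and it is why a simple graph rather than a hypergraph suffices.

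The only real care point is checking this pairwise characterization in both directions and confirming it holds at every instant. Since requirements are constant within a job (the $res^t$ restriction), the conflict relation depends only on the jobs, not on which of their operations are running. With both reductions established, optimal values and schedules transfer in both directions, which gives polynomial equivalence.
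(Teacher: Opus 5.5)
Your proposal is correct and follows the paper's proof: one unit-capacity resource per edge of $G$ in one direction, and in the other an edge joining two jobs whenever they share a resource, so that feasible schedules and makespans coincide. You also spell out two points the paper leaves implicit in its final sentence: with unit capacities every violation is witnessed by a pair of jobs, and the $res^t$ restriction makes conflicts depend only on the jobs, not on which of their operations are running.
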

\begin{proof}
Given any instance of JSC, we construct an instance of $J|res^t.11|C_{max}$ as follows. The set of jobs is the same for the two problems. For every edge $e_s = \{J_k, J_l\} \in E$, we introduce a resource $R_s$ with availability $|R_s| = 1$. The requirements of $R_s$ are such that $R_s(J_k) = R_s (J_l) = 1$ and $R_s(J_j) = 0$, for all $j\notin \{k, l\}$.

Conversely, for a given instance of $J|res^t.11|C_{max}$, we construct an instance of JSC as follows. The set of jobs is the same for the two problems. The conflict graph $G$ is constructed based on the relation: $\{J_k,J_l\} \in E$ if and only if there exists at least one resource $R_s$ such that $R_s (J_k)+R_s (J_l) > 1$.

Thus, two jobs can be processed simultaneously in the JSC problem if and only if they do not share the same resource in the problem $J|res^t.11|C_{max}$ and the theorem follows.
\end{proof}

The simplest case of the JSRC problem, \( J2|res111, p_{ij} = 1|C_{max} \), is already known to be NP-hard~\cite{blazewicz1983}. However, the reduction process relies on an instance where job operations have different resource requirements. As a result, this complexity result does not directly extend to \( J2|res^{t}111, p_{ij} = 1|C_{max} \). In the following, we show that \( J2|res^{t}111, p_{ij} = 1, m_j \leq 2|C_{max} \) can be solved in polynomial time, while the complexity of $J2|res^{t}111, p_{ij} = 1|C_{max}$ remains open.

Given an instance of \( J2|res^{t}111, p_{ij} = 1|C_{max} \), we construct its associated conflict graph. Let \( R_s \) be the required resource, and define the sets $ 
S = \{ J_j \in J \mid R_s(J_j) = 0 \}$ and $C = \{ J_j \in J \mid R_s(J_j) = 1 \}$. The vertex set of the conflict graph is then given by \( V = S \cup C \), and the edge set is defined as $ E = \{ \{ J_j, J_k \} \mid J_j, J_k \in J, \ R_s(J_j) + R_s(J_k) > 1 \}$. By construction, the subgraph induced by \( S \) is an independent set, while the subgraph induced by \( C \) forms a clique. Moreover, there are no edges between \( S \) and \( C \). Thus, the conflict graph is a complement of a complete split graph and is therefore a split graph.  

\begin{theorem} \label{thm:poly}
$J2|ConfG=(V,E), p_{ij} = 1, m_j \leq 2|C_{max}$ can be solved in polynomial time for complements of complete split graphs.
\end{theorem}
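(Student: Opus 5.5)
The plan is to find a closed-form lower bound and then give a constructive schedule that meets it, possibly up to one extra unit in an explicitly characterised case. Write $C$ for the clique side and $S$ for the independent side of the conflict graph. Because the graph is the complement of a complete split graph, every pair of $C$-jobs conflicts and every $S$-job is conflict-free. So a schedule is feasible exactly when, in every unit time slot, each machine runs at most one operation, each job at most one operation, and \emph{at most one operation of a $C$-job} runs.

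I would split the jobs into eight classes by whether they lie in $C$ or $S$ and by their routing: a single operation on $M_1$, a single operation on $M_2$, $M_1\to M_2$, or $M_2\to M_1$. Let $L_1$ and $L_2$ be the machine loads and $q_C$ the total number of operations of $C$-jobs.
\begin{itemize}
\item The candidate lower bound is $LB=\max\{L_1,L_2,q_C\}$, together with the trivial bound $2$ when some two-operation job exists.
\item There may be one refinement. When $q_C$ is the binding term and all $C$-operations lie on one machine, the other machine must be fed entirely by $S$-operations. I would check that this gives no new bound beyond $L_1$ and $L_2$. If it does, I would add the corresponding term.
\end{itemize}

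For the upper bound I would mimic Jackson's rule for $J2|m_j\le 2|C_{max}$ while treating the $C$-operations as a single ``virtual resource'' that must occupy $q_C$ distinct slots. The construction:
\begin{itemize}
\item Schedule the $C$-jobs first, in the Jackson order $M_1\to M_2$ jobs, then single-operation jobs, then $M_2\to M_1$ jobs. Consecutive operations of a two-operation $C$-job go into consecutive slots, so the $C$-operations form a compact chain occupying slots $1,\dots,q_C$ with no idle $C$-slot.
\item In each slot, exactly one machine is then used by a $C$-operation. The other machine is filled greedily with $S$-operations, also prioritised in Jackson order: $M_1\to M_2$ $S$-jobs are started as early as possible on $M_1$ so their second operations become available early on $M_2$, and $M_2\to M_1$ $S$-jobs symmetrically.
\item After slot $q_C$, the remaining $S$-operations are scheduled by Jackson's algorithm. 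Since $S$-jobs are conflict-free, this is the classical problem.
\end{itemize}

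The key step is to prove that this greedy schedule ends at $LB$. I would use an exchange/counting argument. Take the last slot in which some machine is idle while work remains. If the idle machine is $M_k$, every operation still pending for $M_k$ must be a second operation whose predecessor was processed in the previous slot. Since $p_{ij}=1$, at most one such operation is blocked per slot, and the Jackson priority guarantees that the blocked predecessor was started as early as possible. From this I would argue that idle time on the critical machine arises only before the first available operation, which yields $C_{max}\le \max\{L_1,L_2,q_C\}$. The one possible exception is when the last operation is a second operation forced to wait one slot, which gives the $+1$ already captured by the refined bound.

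This idle-time analysis is the main obstacle. It is where the interaction between the ``one $C$-operation per slot'' constraint and the precedence between the two operations of a job must be controlled. My first attempt would be the direct Jackson-style argument above, done separately for the case $q_C\ge\max\{L_1,L_2\}$ and the case $q_C<\max\{L_1,L_2\}$. In the second case I would show the $C$-chain can be interleaved so that no machine idles.

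If the counting becomes messy, my fallback is an exact polynomial algorithm. For a target $T\le n+2$, decide feasibility as a bipartite $b$-matching or flow problem between operations and (slot, machine) pairs, with capacity $1$ on a per-slot ``$C$-node''. The precedence of the two-operation jobs is handled by first fixing, via Jackson order, which operations are first and which are second. The optimal makespan is then found by trying each $T$ in increasing order, which gives polynomial time.
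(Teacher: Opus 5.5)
Your proposal has a genuine gap: neither half of the ``lower bound $=$ greedy'' argument holds.

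\textbf{The lower bound is not attained.} The value $\max\{L_1,L_2,q_C\}$ (together with the bound $2$) is not always the optimum, and your proposed refinement (all $C$-operations on one machine) is not where the extra unit comes from. The missing obstruction sits at the boundary of the schedule. If every operation on $M_k$ is the second operation of its job, nothing can run on $M_k$ in slot~1, so $C_{\max}\ge L_k+1$; the symmetric obstruction applies at the end. Two examples:
\begin{itemize}
\item $n\ge 2$ jobs of $S$, all routed $M_2\to M_1$, give $\max\{L_1,L_2,q_C\}=n$ but optimum $n+1$. The paper handles this case explicitly as ``one unavoidable idle time on each machine''.
\item $k$ jobs of $C$ plus $k$ jobs of $S$, all routed $M_1\to M_2$, give bound $2k$ with $q_C$ binding but optimum $2k+1$.
\end{itemize}
Your own remark that idle time on the critical machine arises ``before the first available operation'' is precisely what prevents the inequality $C_{\max}\le\max\{L_1,L_2,q_C\}$. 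The idle-time argument also overlooks pending $C$-operations that are first operations blocked by the $C$-chain on the other machine.

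\textbf{The greedy can miss the bound even when it is attained.} Packing the $C$-operations into slots $1,\dots,q_C$, with the two operations of each $C$-job consecutive, can create avoidable idle time. Take $C=\{c\}$ with $c\colon M_1\to M_2$, and $S=\{s,t\}$ with both jobs routed $M_2\to M_1$. Then $L_1=L_2=3$ and $q_C=2$. Listing $M_1$ then $M_2$, the slots $(c^{(1)},s^{(1)})$, $(s^{(2)},t^{(1)})$, $(t^{(2)},c^{(2)})$ give makespan $3$. Your greedy instead places $c^{(2)}$ on $M_2$ in slot~2, so $t$ starts only in slot~3 and finishes in slot~4. Deferring part of a $C$-job like this is exactly the device in the paper's proof. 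The paper does an explicit case analysis on $|S_1|,|S_2|,|C_1|,|C_2|$, and in this very subcase (Figure~\ref{fig:thm2case12}) it pairs the remaining operation of $C_1$ with the last operation of $S_2$. Your sentence ``the $C$-chain can be interleaved so that no machine idles'' is where all of this work lies, and you leave it unspecified.

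\textbf{The fallback does not close the gap.} A $b$-matching or flow that assigns operations to (slot, machine) pairs, with a per-slot capacity on $C$-operations, cannot express that a job's second operation must lie in a strictly later slot than its first. Knowing which operation is first says nothing about their relative slots, so a feasible flow need not correspond to a feasible schedule. The range $T\le n+2$ is also too small: with all jobs in $C$ the optimum is $q_C=2n$.

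\textbf{What your route would need.} You would need a correct closed form for the optimum that includes the boundary term above. You would also need constructions meeting it in every configuration of $(|S_1|,|S_2|,|C_1|,|C_2|)$, plus the single-operation classes, which the paper does not treat. Incidentally, the same boundary obstruction affects the paper's own Case~2 subcase $|C_1|+|C_2|=|S_1|+|S_2|$ when $S_2=C_2=\emptyset$ (the second example above). There the claimed idle-free schedule cannot exist; the optimum is $n+1$, which is easy to attain, so the theorem itself stands.
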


\begin{proof}
Let \( S_1 \) (respectively, \( C_1 \)) be the set of jobs in \( S \) (respectively, \( C \)) that are processed first on \( M_1 \) and then on \( M_2 \); similarly, let \( S_2 \) (respectively, \( C_2 \)) be the set of jobs in \( S \) (respectively, \( C \)) that are processed first on \( M_2 \) and then on \( M_1 \).

We distinguish the following four cases:

    	\begin{itemize}
        \item \textbf{Case~1}: \( |C_1| \leq |S_2| \) and \( |C_2| \leq |S_1| \). This case can be further divided into the following subcases.

        If \( |C_1| = |S_2| \) and \( |C_2| = |S_1| \), then we can construct a schedule with no idle time on either machine (and thus an optimal schedule) by scheduling the jobs of \( C_1 \) in parallel with those of \( S_2 \), and the jobs of \( C_2 \) in parallel with those of \( S_1 \).

\begin{figure}[H]
	\begin{center}
	\psscalebox{1.2 1.2} 
	{
		\includegraphics{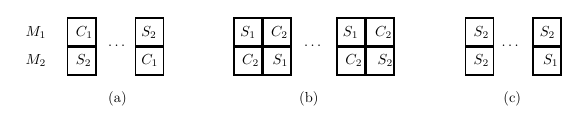}
	}
\end{center}
\caption{Optimal schedule for Case~1 of Theorem~\ref{thm:poly}, where $|C_1| < |S_2|$ and $|C_2| = |S_1|$, with all sets non-empty.} \label{fig:thm2case11}
\end{figure}

\begin{figure}[H]
\begin{center}
\psscalebox{1.2 1.2} 
	{
		\includegraphics{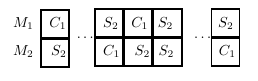}
	}
\caption{Optimal schedule for Case~1 of Theorem~\ref{thm:poly}, where $|C_2| = |S_1| = 0$ and $0 < |C_1| < |S_2|$.} \label{fig:thm2case12}
\end{center}
\end{figure}

        In the case where $|C_1| < |S_2|$ and $|C_2| = |S_1|$, with all $C_1$, $S_2$, $C_2$, and $S_1$ non-empty, we can construct a schedule with no idle time on either machine as follows (Figure~\ref{fig:thm2case11}). We begin by scheduling all jobs in $C_1$ in parallel with $|C_1|$ jobs from $S_2$, ensuring no idle time on either machine (Figure~\ref{fig:thm2case11}.(a)). The jobs in $C_2$ are then scheduled in parallel with the jobs in $S_1$, continuing until the last operation of the jobs in $C_2$ that is scheduled in parallel with an operation from $S_2$ (Figure~\ref{fig:thm2case11}.(b)). Finally, the remaining operations from $S_2$ are scheduled in parallel, until the last operation being scheduled in parallel with the remaining operation from $S_1$ (Figure~\ref{fig:thm2case11}.(c)). We now consider the cases where one or more sets are empty. In the case where $|C_2| = |S_1| > 0$ and $|C_1| = 0$, we can also obtain a schedule with no idle time on either machine by removing, from the previous schedule, the block of $C_1$ jobs that were scheduled in parallel with the jobs from $S_2$ (block (a) in Figure~\ref{fig:thm2case11}). If $|C_2| = |S_1| = 0$ and $0 < |C_1| < |S_2|$, we can construct a schedule with no idle time (see Figure~\ref{fig:thm2case12}) by scheduling the jobs of $C_1$ in parallel with the jobs of $S_2$ until the last operation of $S_2$, which is then scheduled in parallel with an operation from $S_2$. The remaining operations of $S_2$ are then scheduled in parallel until the last operation, which is scheduled in parallel with the remaining operation of $C_1$. In the case where $|C_2| = |S_1| = 0$ and $|C_1| = 0$, we need to schedule only the jobs of $S_2$. In this case, we schedule all the jobs in parallel, resulting in one unavoidable idle time on each machine.

        The case $|C_1| = |S_2|$ and $|C_2| < |S_1|$ follows by symmetry with the case $|C_1| < |S_2|$ and $|C_2| = |S_1|$.

\begin{figure}[H]
\centering
\psscalebox{1.2 1.2} 
	{
		\includegraphics{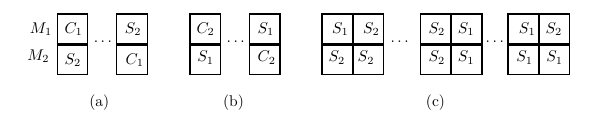}
	}
\caption{Optimal schedule for Case~1 of Theorem~\ref{thm:poly}, where $|C_1| < |S_2|$ and $|C_2| < |S_1|$} \label{fig:thm2case13}
\end{figure}

       The last case corresponds to $|C_1| < |S_2|$ and $|C_2| < |S_1|$. In this situation, we can construct a schedule with no idle time on either machine (Figure~\ref{fig:thm2case13}). We begin by scheduling all jobs in $C_1$ in parallel with $|C_1|$ jobs from $S_2$ (Figure~\ref{fig:thm2case13}.(a)), and all jobs in $C_2$ in parallel with $|C_2|$ jobs from $S_1$ (Figure~\ref{fig:thm2case13}.(b)). Next, we schedule the first operation of one of the remaining jobs from $S_2$ in parallel with the first operation of one of the remaining jobs from $S_1$. This is followed by scheduling the remaining jobs from $S_2$ in parallel, except for the last operation. Then, the remaining jobs from $S_1$ are scheduled in parallel without any idle time, up to the last operation of the last job from $S_1$, which is scheduled in parallel with the remaining operation of the last job from $S_2$ (Figure~\ref{fig:thm2case13}.(c)).
       
        \begin{figure}[H]
\begin{center}
\psscalebox{1.1 1.1} 
	{
		\includegraphics{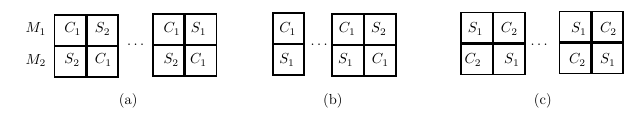}
	}
     \vspace{-0.4cm}
\caption{Optimal schedule for Case 2 of Theorem~\ref{thm:poly}, where \( |C_1| + |C_2| = |S_2| + |S_1| \).} \label{fig:thm2case2}
\end{center}
\end{figure}

        \item \textbf{Case~2:} $|C_1| > |S_2|$ and $|C_2| \leq |S_1|$. We further divide this case into three subcases, according to the comparison between $|C_1| + |C_2|$ and $|S_1| + |S_2|$. 
        
        If $|C_1| + |C_2| > |S_1| + |S_2|$, we schedule the jobs in $C_2$ in parallel with $|C_2|$ jobs from $S_1$. Then, the jobs in $C_1$ are scheduled in parallel with the jobs in $S_2$ and the remaining jobs from $S_1$. This results in $|C_1| + |C_2| - |S_1| - |S_2|$ units of idle time on both machines that we can't avoid.

        If \( |C_1| + |C_2| = |S_2| + |S_1| \), then we can construct a schedule with no idle time on either machine, as illustrated in Figure~\ref{fig:thm2case2}. We begin by scheduling \( |S_2| \) jobs from \( C_1 \) in parallel with the jobs of \( S_2 \), except the last operation of \( C_1 \) that is scheduled in parallel with an operation from \( S_1 \) (Figure~\ref{fig:thm2case2}.(a)). Next, the remaining jobs of \( C_1 \) are scheduled in parallel with the jobs of \( S_1 \), starting from the second operation of the \( S_1 \) job that was paired earlier with a \( C_1 \) job. The last operation of \( C_1 \) is then scheduled in parallel with the remaining operation of \( S_2 \) (Figure~\ref{fig:thm2case2}.(b)). At this point, we are left with \( |S_1| - |C_1| + |S_2| \) jobs from \( S_1 \), which equals the number of jobs in \( C_2 \), given that \( |C_1| + |C_2| = |S_2| + |S_1| \). These remaining jobs can be scheduled in parallel with the jobs of \( C_2 \), again without any idle time (Figure~\ref{fig:thm2case2}.(c)).

        \begin{figure}[H]
        \centering
\psscalebox{1.1 1.1} 
	{
		\includegraphics{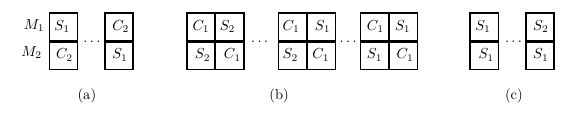}
	}
     \vspace{-0.4cm}
\caption{Optimal schedule for Case 2 of Theorem~\ref{thm:poly}, where \( |C_1| + |C_2| < |S_2| + |S_1| \).} \label{fig:thm2case23}
\end{figure}

       In the final case, where \( |C_1| + |C_2| < |S_2| + |S_1| \), we begin by scheduling all jobs of \( C_2 \) in parallel with \( |C_2| \) jobs from \( S_1 \) (Figure~\ref{fig:thm2case23}.(a)). Next, we schedule \( |S_2| \) jobs from $C_1$ in parallel with the jobs of \( S_2 \), except for the last operation of one job in \( C_1 \), which is scheduled in parallel with an operation from \( S_1 \). The remaining operations of \( C_1 \) are then scheduled in parallel with those of \( S_1 \) (Figure~\ref{fig:thm2case23}.(b)). Finally, the remaining operations of \( S_1 \) are scheduled in parallel until the last operation, which is scheduled in parallel with the remaining operation from \( S_2 \) (Figure~\ref{fig:thm2case23}.(c)).

		\item \textbf{Case~3}: $|C_1| \leq |S_2|$ and $|C_2| > |S_1|$. This case is symmetric to Case~2 and can be handled analogously.
		\item \textbf{Case~4}: $|C_1| > |S_2|$ and $|C_2| > |S_1|$. We schedule all the jobs in $S_2$ in parallel with $|S_2|$ jobs from $C_1$, and the jobs in $S_1$ in parallel with $|S_1|$ jobs from $C_2$. The remaining jobs from $C_1$ and $C_2$ are scheduled in disjoint time intervals, which cannot be avoided since they are all in conflict. Therefore, the optimal schedule results in $|C_1| + |C_2| - |S_1| - |S_2|$ idle time units on both machines.   
	\end{itemize}
\end{proof}

\begin{corollary}
The problem $J2|\text{res}^t111,\ p_{ij} = 1,\ m_j \leq 2|C_{\max}$ can be solved in polynomial time.
\end{corollary}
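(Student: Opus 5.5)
The corollary follows by combining Theorem~\ref{thm:JSCJSRC} with Theorem~\ref{thm:poly}, through the conflict-graph construction described just before Theorem~\ref{thm:poly}.

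\textbf{Step 1: build the conflict graph.} Take an instance of $J2|\text{res}^t111,\ p_{ij}=1,\ m_j\leq 2|C_{\max}$. There is a single resource $R_s$ with one unit available, and each job's requirement is constant over its operations and lies in $\{0,1\}$. As in the proof of Theorem~\ref{thm:JSCJSRC}, form the graph $G$ on the job set in which $\{J_k,J_l\}\in E$ exactly when $R_s(J_k)+R_s(J_l)>1$. Two jobs may then run simultaneously on different machines if and only if they are not adjacent in $G$. So the feasible schedules of the two instances coincide, and so do their makespans.

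\textbf{Step 2: identify the graph class.} As already observed in the paper, $G$ is a clique on $C=\{J_j : R_s(J_j)=1\}$ and an independent set on $S=\{J_j : R_s(J_j)=0\}$, with no edges between $C$ and $S$. Hence $G$ is the complement of a complete split graph. One checks the degenerate cases $C=\emptyset$ or $S=\emptyset$ (an edgeless graph or a complete graph); they are still covered by the case analysis of Theorem~\ref{thm:poly}, which allows empty sets. The machine count, the unit processing times and $m_j\le 2$ are untouched by the transformation.

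\textbf{Step 3: apply Theorem~\ref{thm:poly}.} The transformed instance is an instance of $J2|ConfG=(V,E),\ p_{ij}=1,\ m_j\leq 2|C_{\max}$ on a complement of a complete split graph, so Theorem~\ref{thm:poly} solves it in polynomial time. The construction itself takes $O(n^2)$ time, or even $O(n)$ if one records only the partition into $S$ and $C$ together with the machine orders, which is all that proof uses. Mapping the optimal schedule back is the identity map.

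\textbf{Main obstacle.} The only delicate point is justifying that the feasibility notions truly coincide. Resource constraints forbid simultaneity on \emph{any} machines, whereas conflicts are stated for different machines; since each machine processes one operation at a time anyway, the two notions agree. The remaining concern is whether a job with $m_j=1$ is handled in Theorem~\ref{thm:poly}. If not, I would extend that case analysis slightly by treating such jobs as filling idle slots, noting that this only helps.
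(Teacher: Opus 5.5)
Your proposal is correct and is essentially the paper's argument: you map the instance to its conflict graph as in Theorem~\ref{thm:JSCJSRC}, observe that this graph is a clique on $C$ plus the isolated vertices $S$ (the complement of a complete split graph), and apply Theorem~\ref{thm:poly}. Your side remark on jobs with $m_j=1$ does identify something the case analysis of Theorem~\ref{thm:poly} never treats explicitly, since it only partitions two-operation jobs into $S_1,C_1,S_2,C_2$; for the corollary, however, you can simply invoke that theorem as stated, and your proposed patch would not work as phrased, because a one-operation job in $C$ cannot overlap any other $C$-operation and so cannot fill the idle slots caused by $C$--$C$ conflicts.
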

\begin{proof}
This follows from Theorems~\ref{thm:JSCJSRC} and~\ref{thm:poly}.
\end{proof}

The computational complexity of the general case $J2|\text{res}^t111,\ p_{ij} = 1|C_{\max}$ remains open.

\section{Mathematical formulations}\label{sec:MILP}
In this section, we present four different MILPs for $ Jm|ConfG = (V,E)|C_{max} $, adapted from existing models in the literature to take into account the conflict constraints between jobs.

\subsection{Precedence relationship-based formulation}
\label{first_preced}
This model builds on the classical formulation of \citet{manne1960}, which accounts for the precedence relationships between operations, and introduces the following decision variables. The variable $S_{ij}$ denotes the start time of operation $O_{ij}$. The binary variable $y_{O_{ij}O_{uk}}$ equals 1 if operation $O_{ij}$ is scheduled before operation $O_{uk}$ on the same machine (i.e., $\mu_{ij} = \mu_{uk}$), and 0 otherwise; note that we may have $J_j = J_k$ since a job can visit the same machine multiple times. The binary variable $z_{O_{ij}O_{uk}}$ equals 1 if operation $O_{ij}$ is scheduled before operation $O_{uk}$ when jobs $J_j$ and $J_k$ are in conflict, and 0 otherwise. The formulation is summarized in (\mFA).  

\begin{subequations}\label{P1}
	\begin{align}
		(\mFA)\qquad \min \quad & C_{max} \label{F1:obj}\\ 
        s.t. \quad & S_{i+1,j} \geq S_{ij} + p_{ij}, \quad && \forall J_j \in J, i \in \{1,\ldots,m_j-1\} \label{F1:precedence1}\\
        & C_{max} \geq S_{m_j,j} + p_{m_j,j}, \quad &&\forall J_j \in J \label{F1:makespan1}\\
        & S_{uk} \geq S_{ij} + p_{ij} - M(1-y_{O_{ij}O_{uk}}), \quad&& \forall O_{ij}\neq O_{uk}: \mu_{ij} = \mu_{uk}, j<k \label{F1:machine1a}\\
        & S_{ij} \geq S_{uk} + p_{uk} - My_{O_{ij}O_{uk}},\quad&& \forall O_{ij}\neq O_{uk}: \mu_{ij} = \mu_{uk}, j<k \label{F1:machine1b}\\
        & S_{uk} \geq S_{ij} + p_{ij} - M(1-z_{O_{ij}O_{uk}}), \quad &&  \forall O_{ij}, O_{uk}: \{J_j,J_k\} \in E, j<k, \mu_{ij} \neq \mu_{uk} \label{F1:conflict1a} \\
& S_{ij} \geq S_{uk} + p_{uk} - Mz_{O_{ij}O_{uk}}, \quad &&  \forall O_{ij}, O_{uk}: \{J_j,J_k\} \in E, j<k, \mu_{ij} \neq \mu_{uk} \label{F1:conflict1b}\\
		&y_{O_{ij}O_{uk}} \in \{0,1\}, && \forall O_{ij}\neq O_{uk}: \mu_{ij} = \mu_{uk}, j<k\label{F1:vary} \\
  	&z_{O_{ij}O_{uk}} \in \{0,1\}, &&\forall O_{ij}, O_{uk}: \{J_j,J_k\} \in E, j<k, \mu_{ij} \neq \mu_{uk} \label{F1:varz} \\
  &C_{max}, S_{ij} \geq 0, && \forall O_{ij}.\label{F1:vars} 
\end{align}
\end{subequations}

The objective function~\eqref{F1:obj} minimizes the makespan. Constraints~\eqref{F1:precedence1} enforce the precedence relationships between the operations of each job. Constraints~\eqref{F1:makespan1} equate the makespan to the maximum completion times of the last operation of each job. Constraints~\eqref{F1:machine1a} and~\eqref{F1:machine1b} are paired disjunctive constraints on two operations of the same machine, ensuring that $O_{ij}$ either precedes $O_{uk}$ or (exclusive) follows it. Finally, Constraints~\eqref{F1:conflict1a} and~\eqref{F1:conflict1b} are paired disjunctive constraints on operations of conflicting jobs, ensuring that such operations cannot be processed simultaneously on different machines.

An alternative formulation can be obtained by exploiting the symmetry between the two possible orderings of each pair of operations. More specifically, for each pair of disjunctive constraints in $(\mFA)$, we retain the first inequality and replace the second one by an equality involving two symmetric binary variables. For the pair~\eqref{F1:machine1a}--\eqref{F1:machine1b}, constraint~\eqref{F1:machine1a} ensures that if $y_{O_{ij}O_{uk}}=1$, then $O_{ij}$ precedes $O_{uk}$ on their common machine. If $y_{O_{ij}O_{uk}}=0$, then $O_{uk}$ must precede $O_{ij}$, which is ensured by introducing the symmetric variable $y_{O_{uk}O_{ij}}$ and imposing

\begin{subequations}
\begin{align} 
y_{O_{ij}O_{uk}}+y_{O_{uk}O_{ij}} &= 1,
&& \forall O_{ij}\neq O_{uk}: \mu_{ij}=\mu_{uk},\ j\neq k
\end{align} 
\end{subequations}

The pair~\eqref{F1:conflict1a}--\eqref{F1:conflict1b} is treated similarly by introducing the symmetric variable $z_{O_{uk}O_{ij}}$ and imposing

\begin{subequations}
\begin{align} 
z_{O_{ij}O_{uk}}+z_{O_{uk}O_{ij}} &= 1,
&& \forall O_{ij},O_{uk}: \{J_j,J_k\}\in E,\ j\neq k,\ \mu_{ij}\neq\mu_{uk}
\end{align} 
\end{subequations}

Accordingly, formulation $(\mFB)$ is obtained from $(\mFA)$ by replacing the pairs~\eqref{F1:machine1a}--\eqref{F1:machine1b} and~\eqref{F1:conflict1a}--\eqref{F1:conflict1b} by the corresponding big-$M$ inequalities and symmetry constraints described above, and by defining the additional symmetric binary variables. The remaining constraints of $(\mFA)$ are unchanged.

\subsection{Time-indexed formulation}
The idea of time-indexed variables was first proposed by \citet{bowman1959schedule} and subsequently applied to a variety of scheduling problems. We adapt this approach to our setting. Our formulation uses a binary variable $x_{ijt}$, which equals 1 if operation $O_{ij}$ starts at time $t$, and 0 otherwise. Let $T$ be an upper bound on the job completion times. The resulting formulation is presented in~(\mFC).

\begin{subequations}\label{P2}
 \small
	\begin{align}
		(\mFC)\qquad \min \quad & C_{max} \label{F2:obj}\\ 
        s.t. \quad & \sum_{t=0}^{T-p_{ij}} x_{ijt} = 1, \quad && \forall O_{ij}\label{F2:assign4}\\
        & \sum_{t=0}^{T-p_{i+1,j}} t x_{i+1,jt} \geq \sum_{t=0}^{T-p_{ij}} t  x_{ijt}+ p_{ij}, \quad && \forall J_j \in J, i \in \{1,\ldots,m_j-1\} \label{F2:precedence4}\\
        & \sum_{O_{ij}:\mu_{ij}=u} \sum_{\tau=\max\{0,t-p_{ij}+1\}}^{\min \{t,T-p_{ij} \}} x_{ij\tau} \leq 1, \quad && \forall u \in M,t \in \{0,\ldots,T-1\} \label{F2:machine4}\\
        & \sum_{i=1}^{m_j} \sum_{\tau=\max\{0,t-p_{ij}+1\}}^{\min\{t,T-p_{ij}\}} x_{ij\tau} + \sum_{u=1}^{m_k} \sum_{\tau=\max\{0,t-p_{uk}+1\}}^{\min\{t,T-p_{uk}\}} x_{uk\tau} \leq 1,  && \forall \{J_j,J_k\}\in E,\; j<k, \notag \\ 
       && &  t\in\{0,\ldots,T-1\} \label{F2:conflict4} \\
        & C_{max} \geq \sum_{t=0}^{T-p_{m_j,j}} (t + p_{m_j, j}) x_{m_j, j t}, \quad && \forall J_j \in J\label{F2:makespan4}\\
& x_{ijt} \in \{0,1\}, \quad &&\forall O_{ij},t \in \{0,\ldots,T-p_{ij}\}. \label{eq:binary4}
\end{align}
\end{subequations}

The objective function~\eqref{F2:obj} minimizes the makespan. Constraints~\eqref{F2:assign4} ensure that each operation starts exactly once. Constraints~\eqref{F2:precedence4} enforce the precedence relationships between operations of the same job. Constraints~\eqref{F2:machine4} ensure that each machine processes at most one operation at a time. Constraints~\eqref{F2:conflict4} prevent operations of conflicting jobs from being scheduled simultaneously. Finally, Constraints~\eqref{F2:makespan4} equate the makespan to the maximum completion time of the last operation of each job. Note that for $x_{ijt}$, we can set $x_{ijt} = 0$ for $t < \sum_{u=1}^{i-1} p_{uj}$ and $t > T - \sum_{u=i}^{m_j} p_{uj}$, due to the precedence constraints among the operations of each job.

Constraints~\eqref{F2:conflict4}, which prevent conflicting jobs from being processed simultaneously, can be reformulated by exploiting the conflict degree of each job. For a given job $J_j$, the conflict degree corresponds to the degree of its associated vertex in the conflict graph. We denote it by $\mathrm{deg}_G(J_j)$, and let $N_G(J_j)$ represents the set of neighbors of $J_j$ in $G$. The idea is that, at any time $t$, if a job $J_j$ is being executed, then none of its neighboring jobs in $G$ can be processed at that time. Accordingly, Constraints~\eqref{F2:conflict4} can be equivalently expressed as:
\begin{subequations}\label{P3}
\begin{align}   
\mathrm{deg}_G(J_j)
\sum_{i=1}^{m_j} \sum_{\tau=\max\{0,\,t-p_{ij}+1\}}^{\min\{t,\,T-p_{ij}\}} x_{ij\tau}
+ \sum_{k \in N_G(J_j)} \sum_{u=1}^{m_k} 
\sum_{\tau=\max\{0,\,t-p_{uk}+1\}}^{\min\{t,\,T-p_{uk}\}} x_{uk\tau}
\leq \mathrm{deg}_G(J_j),
\quad \forall J_j \in J, \nonumber\\ t \in \{0,\ldots,T-1\}. \label{F3:conflict}
\end{align}
\end{subequations}

This formulation requires $O(|V|T)$ constraints, whereas Constraints~\eqref{F2:conflict4} involve $O(|E|T)$ constraints. 
We denote the model obtained by replacing Constraints~\eqref{F2:conflict4} in \mFC{} with Constraints~\eqref{F3:conflict} as \mFD{}.

\subsection{Warm start} \label{sec:warm_start}
We observed from the computational experiments that for large instances, the mathematical formulations may fail to find a feasible solution within the time limit. We introduce in this section a fast heuristic that provides a warm start for the solver, thereby helping it produce an upper bound within the time limit.

\begin{algorithm}[H]
	\caption{Heuristic for JSC warm-starting}\label{heuristic_warmstart} 
	\hspace*{\algorithmicindent} \textbf{Inputs}: processing times $(p_{ij})_{1\leq i \leq m_j, 1 \leq j \leq n}$, adjacency matrix $A$ of $G$ 
	\begin{algorithmic}[1]
		\State Initialize $\pi'$ with the first operations of each job;
        \State Initialize the earliest starting times $(s_{ij})_{1\leq i \leq m_j, 1 \leq j \leq n}$ of the operations of $\pi'$ to zero;
		\While {$\pi' \neq \emptyset $} 
		\State Select an operation $O_{ij}$ of $\pi'$ following a priority rule;\label{WS:priorityrule}
		\State $c_{ij}=s_{ij}+p_{ij}$;
		\State $\pi'=\pi'\setminus \{J_{ij}\}$;
        \State Insert the next operation of the job corresponding to $O_{ij}$ (if any) into $\pi'$;
		\For{ $O_{i'j'} \in \pi'$ that are in conflict with $O_{ij}$}\label{WS:Fornondelay}
		\If{$s_{i'j'} <  s_{ij}+p_{ij}$ }
		\State $s_{i'j'}=s_{ij}+p_{ij}$;\label{WS:endFornondelay}
		\EndIf
		\EndFor
		\EndWhile
	\end{algorithmic}
\hspace*{\algorithmicindent} \textbf{Outputs}: completion times $(c_{ij})_{1\leq i \leq m, 1 \leq j \leq n}$ and the makespan.
\end{algorithm}

The heuristic introduced in this section is detailed in Algorithm~\ref{heuristic_warmstart} and extends classical list scheduling approaches~\citep{panwalkar1977survey} to account for conflict constraints. At each iteration, it selects in Step~\ref{WS:priorityrule} one operation from a candidate set that includes, for every job, its next unscheduled operation. The selection follows a priority rule that chooses the operation with the earliest starting time; ties are broken using a priority metric reflecting the remaining workload of the job and its conflicting jobs. For an operation $O_{ij}$, this metric is computed as follows, where $\mathcal{S}_j$ denotes the set of unscheduled operations of job $J_j$:

$$\beta_{ij}= \underbrace{\sum_{\substack{O_{uj} \in \mathcal{S}_j}} p_{uj}}_{\text{Current job workload}} + \underbrace{\sum_{j' \in N_G(j)} \sum_{\substack{O_{uj'} \in \mathcal{S}_{j'}}} p_{uj'}}_{\text{Conflicting jobs workload}}$$

This metric prioritizes operations associated with jobs that either have a large remaining workload or are highly constrained by conflicts. Scheduling these operations earlier enhances the likelihood of executing them in parallel with other operations. In contrast, delaying them limits this possibility, potentially increasing the makespan.

To satisfy the conflict constraints, once an operation is selected, all operations that conflict with it--either belonging to the same job or to conflicting jobs--whose start times are earlier than its completion time are postponed. Their start times are updated to match the completion time of the selected operation, thereby ensuring that no conflicting operations are executed simultaneously (Steps~\ref{WS:Fornondelay}-~\ref{WS:endFornondelay} of Algorithm~\ref{heuristic_warmstart}).

To enhance the warm-start heuristic, we consider eight additional job-priority rules to select the operation to be scheduled at Step~\ref{WS:priorityrule} of Algorithm~\ref{heuristic_warmstart}. For each job $J_j$, let $P_j=\sum_{i=1}^{m_j}p_{ij}$ denote its total processing time, $p_j^{\max}$ and $p_j^{\min}$ its maximum and minimum operation processing times, $\bar p_j=P_j/m_j$ its mean operation processing time, and $m_j$ its number of operations. The SPT and LPT rules order jobs by nondecreasing and nonincreasing $P_j$, respectively. MWR and LWR order jobs by nonincreasing $p_j^{\max}$ and nondecreasing $p_j^{\min}$, respectively, while MOR and LOR order jobs by nonincreasing and nondecreasing $m_j$. Finally, MPT and descending MPT order jobs by nondecreasing and nonincreasing $\bar p_j$, respectively. Ties are broken by job index, increasingly for nondecreasing rules and decreasingly for nonincreasing rules.

Each rule induces a priority order on the jobs, which is used to select operations from the candidate list in Step~4. The resulting job orders are also used later in Section~\ref{meta_initialization} to construct seed chromosomes for the genetic algorithm.

\section{Lower bounds}\label{sec:LB}
We present in this section three classes of lower bounds on the makespan for the general $m$-machine JSC problem $Jm\,|\,ConfG=(V,E)\,|\,C_{\max}$. These bounds are incorporated into the stopping criteria of the genetic algorithms and are used to benchmark the solution quality of both the genetic algorithms and the mathematical formulations. They are adapted from~\cite{tellache2023genetic}, which proposed bounds for the OSC problem, and have been extended here to the JSC problem.

\subsection{Conflict constraint relaxation-based lower bounds}
Relaxing the conflicts between jobs reduces the problem to the basic job shop scheduling problem. Therefore, the lower bounds applicable to the basic job shop scheduling problem also extend to $Jm|ConfG=(V,E)|C_{max}$. One such bound is the classical one, which is derived from the maximum machine load and the total processing time of the jobs.

\begin{equation*}
LB_{1}=\max\{ \{\sum\limits_{i=1}^{m_j}p_{ij} | j=1,\ldots,n \} \cup \{\sum\limits_{j=1}^{n}p_{ij} | i=1,\ldots,m_j\} \} 
\end{equation*}

\subsection{Weighted agreement graph-based lower bound}\label{LBIS}
The idea behind this lower bound is to identify a set of conflicting jobs or operations with the maximum total processing time. This total serves as a lower bound on the makespan, as conflicting jobs or operations must be scheduled in disjoint time intervals. 

Let \( W \) be an \( n \)-dimensional vector representing the processing times of the jobs in the set \( V \). Consider the agreement graph \( \overline{G} = (V, \overline{E}) \), defined as the complement of the conflict graph \( G \). By assigning the processing times in \( W \) as weights to the vertices of \( \overline{G} \), we obtain the weighted graph \( \overline{G}_w = (V, \overline{E}, W) \). Any set of conflicting jobs corresponds to an independent set in \( \overline{G}_w \), and the maximum-weight independent set thus provides a valid lower bound on the makespan. However, finding such a set is an NP-hard problem~\citep{GJ79}. To approximate this bound, we apply three greedy algorithms---\texttt{GWMIN} and \texttt{GWMIN2}---from~\cite{SMK03} to \( \overline{G}_w \). The resulting lower bounds are denoted by \( LB_2 \) and \( LB_3 \) respectively.

\subsection{Binary variables relaxation-based lower bound}
This class of bounds is obtained by solving the mathematical formulations introduced in Section~\ref{sec:MILP} under a fixed time limit. For each formulation, we retrieve the best lower bound found by the solver within the allotted time. These bounds are at least as tight as those provided by the linear relaxations. We denote the resulting bounds as $LB_4$, $LB_5$, $LB_6$ and $LB_{7}$, corresponding to \mFA{}, \mFB{}, \mFC{}, and \mFD{}, respectively.

\section{Genetic algorithms}\label{sec:GA}
Genetic Algorithms (GAs), originally proposed by Holland~\citep{Holland1975}, are population-based metaheuristics that maintain a population of candidate solutions evolving over successive generations through genetic operations (selection, crossover, mutation, and replacement). Each solution is encoded as a chromosome (composed of genes) and evaluated using a fitness function, which guides the search toward high-quality solutions even in the presence of multiple local optima.

In this section, we propose a GA for the problem $Jm,|,ConfG=(V,E),|,C_{\max}$. The algorithm begins by generating an initial population of size $N_p$, from which the top $N_e$ chromosomes are selected to form the elite set. At each iteration, the new population is initialized with the elite solutions. While its size remains below $N_p$, pairs of parent chromosomes are selected from the previous population and undergo crossover with probability $p_c$. Mutation is then applied to the resulting offspring--or to the parents when crossover is not performed--with probability $p_m$. The resulting chromosomes are added to the current population. Once the population reaches size $N_p$, the top $N_e$ chromosomes are updated as the elite set, and the process repeats until a stopping criterion is met. The algorithm returns the best solution found so far. A summary of this process is provided in Algorithm~\ref{GAalgo}, and the procedures used in each step are detailed in the following sections.

\begin{algorithm}[H]
\caption{General framework of the GA}\label{GAalgo}
\begin{algorithmic}[1]
    \State Generate an initial population of size $N_p$ and select the top $N_e$ chromosomes as the elite set;
    \While{none of the stopping criteria are met}
        \State Initialize the current population with the elite set;
        \While{the size of the current population is less than $N_p$}
            \State Select two parents $P_1$ and $P_2$ from the previous population and initialize $C_1 \gets P_1$, $C_2 \gets P_2$;
            \If{\texttt{Rand} $< p_c$}
                \State Apply a crossover operator to $(P_1, P_2)$ to generate offspring $(C_1, C_2)$;
            \EndIf
            \If{\texttt{Rand} $< p_m$}
                \State Apply a mutation operator to $C_1$ and $C_2$;
            \EndIf
            \State Evaluate $C_1$ and $C_2$;
            \State Add $C_1$ and $C_2$ to the current population;
        \EndWhile
        \State Update the elite set with the best $N_e$ chromosomes from the current population;
    \EndWhile
\end{algorithmic}
\hspace*{\algorithmicindent}\textbf{Output}: the best makespan and its associated schedule.
\end{algorithm}

\subsection{Encoding and decoding}

We use a permutation $\pi$ with repetition representation~\cite{bierwirth1995generalized} to encode solutions. In this representation, each chromosome is a permutation where every job index $j$ appears exactly as many times as the number of operations of that job $m_j$. The $i$-th occurrence of a job implicitly corresponds to its $i$-th operation, since during decoding we always schedule the next unscheduled operation of the job. This ensures that precedence constraints within each job are satisfied. Moreover, this representation is well suited for permutation-based genetic operators, as it guarantees feasibility throughout the evolutionary process. For example, for three jobs with 3, 2, and 4 operations respectively, a valid permutation is 
$\pi = (1,2,1,3,2,3,1,3,3)$, where each job index appears exactly as many times as the number of its operations.

To evaluate a chromosome, we decode its permutation into a feasible schedule using two decoding procedures that generate non-delay or active schedules, and the resulting makespan serves as the chromosome’s fitness value. An active schedule is one in which no operation can be started earlier without delaying another operation. For regular performance measures, including makespan, every optimal schedule is necessarily active. Non-delay schedules form a subset of active schedules in which no machine remains idle if an operation is available for processing. Although not all optimal schedules are non-delay, this restriction often produces schedules with good makespan performance while substantially reducing the search space~\cite{baker2018principles}.

Let $\pi'$ initially be a permutation of the first operations of each job, ordered as they appear in $\pi$. In the following algorithms, $\pi'$ is updated at each iteration: when an operation is scheduled, the next operation of the same job is inserted into $\pi'$ in the same relative position as it appears in $\pi$.  

Algorithm~\ref{active_schedule} constructs active schedules based on the method of~\citet{GT1960}. At each iteration, the first operation $O_{i'j'}$ in $\pi'$ with the earliest completion time is selected. Next, the first operation in $\pi'$ that conflicts with $O_{i'j'}$ (including $O_{i'j'}$ itself) and has a start time earlier than the expected completion time of $O_{i'j'}$ is scheduled. The start times of all conflicting operations in $\pi'$, including the operation of $J_{j'}$ just inserted, are adjusted to the completion time of $O_{i'j'}$. This adjustment ensures that scheduling operations at their earliest start times will not violate the conflict constraints.

We can show that the schedule produced is active by using the same argument as in Lemma 1 of~\cite{tellache2023genetic}, where a similar algorithm was used for the OSC problem.

\begin{algorithm}[H]
	\caption{Build active schedules by Giffler and Thompson mechanism~\cite{GT1960}}
    \label{active_schedule}
	\hspace*{\algorithmicindent} \textbf{Inputs}:  permutation $\pi$,  processing times $(p_{ij})_{1\leq i \leq m_j, 1 \leq j \leq n}$, adjacency matrix $A$ of $G$  
	\begin{algorithmic}[1]
        \State Initialize $\pi'$ with the first operations of each job, ordered as they appear in $\pi$;
        \State Initialize the earliest starting times $(s_{ij})_{1\leq i \leq m_j, 1 \leq j \leq n}$ of the operations of $\pi'$ to zero;
		\While {$\pi' \neq \emptyset $} 
		\State Select an operation $O_{i'j'}$ of $\pi'$ with the minimum earliest completion time $s_{i'j'}+p_{i'j'}$; \label{ctactive1}
		\State Select the first operation $O_{ij}$ of $\pi'$ that is in conflict with $O_{i'j'}$ (including $O_{i'j'}$) such that $s_{ij} < s_{i'j'}+p_{i'j'}$; \label{ctactive2}
		\State $c_{ij}=s_{ij}+p_{ij}$;
		\State $\pi'=\pi' \setminus \{O_{ij}\}$;
        \State Insert the next operation of the job corresponding to $O_{ij}$ (if any) into $\pi'$ in the same relative position as it appears in $\pi$;
		\For{ $O_{i''j''} \in \pi'$ that are in conflict with  $O_{ij}$} \label{feasibility}
		\If{$s_{i''j''} < s_{ij}+p_{ij}$ }
		\State $s_{i''j''}=s_{ij}+p_{ij}$;
		\EndIf
		\EndFor
		\EndWhile      
	\end{algorithmic}
	\hspace*{\algorithmicindent} \textbf{Outputs}: completion times $(c_{ij})_{1\leq i \leq m, 1 \leq j \leq n}$ and the makespan.
\end{algorithm}

The non-delay schedules are generated using Algorithm~\ref{non_delay_schedule}. Its structure is similar to Algorithm~\ref{heuristic_warmstart}, but here the priority rule is determined by the order of operations in the permutation~$\pi$. For completeness, the full version of the algorithm adapted to this permutation is provided in Algorithm~\ref{non_delay_schedule}. At each iteration, the algorithm selects the first operation in $\pi'$ with the smallest earliest start time, inserts it into the schedule, and then updates the earliest start times of the operations in $\pi'$. This update is performed in a manner similar to Algorithm~\ref{active_schedule}, ensuring that the conflict constraints remain satisfied.

The schedule produced by this procedure is non-delay. This can be shown using the same argument as in the proof of Lemma 2 of~\cite{tellache2023genetic}, where a similar algorithm was used for the OSC problem.

\begin{algorithm}[H]
	\caption{Build non-delay schedules}\label{non_delay_schedule} 
	\hspace*{\algorithmicindent} \textbf{Inputs}:  permutation $\pi$, processing times $(p_{ij})_{1\leq i \leq m_j, 1 \leq j \leq n}$, adjacency matrix $A$ of $G$ 
	\begin{algorithmic}[1]
		\State Initialize $\pi'$ with the first operations of each job, ordered as they appear in $\pi$;
        \State Initialize the earliest starting times $(s_{ij})_{1\leq i \leq m_j, 1 \leq j \leq n}$ of the operations of $\pi'$ to zero;
		\While {$\pi' \neq \emptyset $} 
		\State Select the first operation $O_{ij}$ of $\pi'$ with the minimum earliest starting time $s_{ij}$;\label{CTnondelay}
		\State $c_{ij}=s_{ij}+p_{ij}$;
		\State $\pi'=\pi'\setminus \{J_{ij}\}$;
        \State Insert the next operation of the job corresponding to $O_{ij}$ (if any) into $\pi'$ in the same relative position as it appears in $\pi$;
		\For{ $O_{i'j'} \in \pi'$ that are in conflict with $O_{ij}$}\label{Fornondelay}
		\If{$s_{i'j'} <  s_{ij}+p_{ij}$ }
		\State $s_{i'j'}=s_{ij}+p_{ij}$;
		\EndIf
		\EndFor
		\EndWhile
	\end{algorithmic}
\hspace*{\algorithmicindent} \textbf{Outputs}: completion times $(c_{ij})_{1\leq i \leq m, 1 \leq j \leq n}$ and the makespan.
\end{algorithm}

\subsection{Initial population generation}
\label{meta_initialization}
The initial population is either generated entirely at random or seeded with chromosomes constructed by sorting operations according to eight different criteria. These criteria include sorting by processing time $p_{ij}$ or conflict degree $c_{ij}$ (the number of operations conflicting with $O_{ij}$) in either increasing or decreasing order, as well as by the ratios $c_{ij}/p_{ij}$ and $a_{ij}/p_{ij}$ (where $a_{ij}$ denotes the agreement degree of $O_{ij}$, i.e., the number of operations that can be processed in parallel with it) also in both increasing and decreasing order. For each sorting, a chromosome is generated by representing every operation with the index of its corresponding job, and only unique chromosomes are retained in the population.

\subsection{Selection, crossover, and mutation}
The generation process of our algorithm relies on three main components: parent selection, crossover, and mutation. First, parent chromosomes are selected from the current population; then, offspring are produced from these parents through crossover and mutation. 

To select parents for reproduction, we use the $n$-size tournament selection~\cite{pezzella2008genetic}. In this method, two tournaments are performed, each involving a subset of chromosomes chosen at random from the population. The winner of each tournament--that is, the chromosome with the best fitness--is then selected as a parent for crossover.

Crossover combines two parent chromosomes with probability $p_c$ to generate offspring by exchanging genetic information. Its purpose is to inherit and combine advantageous traits from both parents, thereby increasing the likelihood of producing higher-quality solutions. A variety of crossover operators have been proposed for permutation-based encodings. In our algorithm, we use one of three operators: one-point, two-point, and mask-based crossover~\cite{REEVES1995}. In the one-point crossover, each parent is cut at a randomly selected position, and the segments beyond the cut are exchanged to form two offspring. In the two-point crossover, two cut points are selected at random, and the segment between these points is swapped between the parents. In the mask-based crossover, also known as uniform crossover or gene-wise recombination, a binary mask of length equal to the chromosome size is generated according to a Bernoulli distribution. For each gene position, the offspring inherits the gene from the first parent if the corresponding mask value is equal to one; otherwise, the gene is inherited from the second parent. When applied to the encoding adopted for the JSC--where each job identifier must appear in the chromosome a number of times equal to its total number of operations--this position-wise recombination may violate the required job-occurrence constraints. Therefore, a post-processing (repair) procedure (explained in Section~\ref{sec:repairprocedure}) is employed to restore feasibility by adjusting gene frequencies while preserving the relative ordering of operations as much as possible.

Mutation is applied to each chromosome with probability $p_m$, introducing random changes to its genes to maintain diversity in the population and enable independent evolution. We use one of the following mutation operators: swap, which exchanges genes at two randomly selected positions, and move, which relocates a gene from one randomly chosen position to another~\cite{REEVES1995}.

\subsection{Post-processing (repair) procedure} \label{sec:repairprocedure}

The application of the mask-based crossover to our chromosome representation may produce infeasible offspring, in the sense that the number of occurrences of a job index $k$ in a chromosome differs from the required value $m_k$. To restore feasibility, a post-processing (repair) procedure is applied to each offspring chromosome before decoding and fitness evaluation.

Let $\pi = (g_1,\ldots,g_{\sum_{j}m_j})$ denote an offspring chromosome. For each job $J_k$, $k \in \{1,\ldots,n\}$, its occurrence count is defined as $f_k(\pi) = | \{ l \in \{1,\ldots,\sum_{j}m_j\} \mid g_l = k \}|.$ The chromosome $\pi$ is feasible if and only if $f_k(\pi) = m_k$ for all jobs $J_k$. Otherwise, we define the sets of surplus and deficit jobs as
$ J^{+} = \{ J_k \mid f_k(\pi) > m_k \},$ $J^{-} = \{ J_k \mid f_k(\pi) < m_k \}. $

The repair procedure replaces surplus occurrences of job indices in $J^{+}$ with missing indices from $J^{-}$ until the feasibility condition $f_k(\pi) = m_k$ holds for all $J_k$. Replacements are performed sequentially and in a left-to-right manner, so as to preserve the relative ordering of unaffected genes as much as possible and to limit disruption of inherited genetic structures. After repair, each job index $k$ appears exactly $m_k$ times in the chromosome.

\section{Computational experiments}\label{sec:CE}

\label{instances_description} 
This section presents the computational experiments conducted to assess the performance of the mathematical formulations, the lower bounds, and the genetic algorithm. All algorithms were implemented in C++. The mathematical models were solved with IBM ILOG CPLEX 22.1.1 with a one-hour time limit for each instance. The experiments were carried out on a machine equipped with an 11\textsuperscript{th}-generation Intel\textsuperscript{\textregistered} Core\textsuperscript{TM} i5-11500 @ 2.70 GHz × 12 processor and 32 GiB of RAM.

The experiments were conducted on three instance sets. The first two were adapted from the benchmark instances of~\citet{taillard1993benchmarks} and~\citet{lawrence1984resouce}, originally proposed for the basic job-shop scheduling problem. The third set consists of randomly generated generalized job shop instances, in which a job may contain more than one operation assigned to the same machine.

The Taillard benchmark set~\cite{taillard1993benchmarks} consists of 8 classes, each containing 10 instances. The classes correspond to the following problem sizes: $(n,m) \in \{(15,15),\allowbreak (20,15),\allowbreak (20,20), \allowbreak(30,15),\allowbreak (30,20),\allowbreak (50,15),\allowbreak (50,20),\allowbreak (100,20)\}$. In all instances, the number of operations in each job is equal to the number of machines. Moreover, the instances were constructed such that the total processing time is balanced across both machines and jobs.

The Lawrence benchmark set~\cite{lawrence1984resouce} consists of 8 classes, each containing 5 instances. The classes correspond to the following problem sizes: $(n,m) \in \{(10,5),\allowbreak (15,5),\allowbreak (20,5),\allowbreak (10,10),\allowbreak (15,10), \allowbreak(20,10),\allowbreak (30,10), \allowbreak(15,15)\}$. In all instances, the number of operations in each job is equal to the number of machines, and each job visits every machine exactly once. 

For each of these instances, conflict graphs were generated using the $G(n, p)$~\citet{ER59} algorithm. Given a set of $n$ vertices, this algorithm includes each of the $\binom{n}{2}$ possible edges with probability $p$. To analyze the impact of varying graphs densities, we considered three values of $p \in \{0.2, 0.5, 0.8\}$, corresponding to low, medium, and high density levels, respectively. For each instance from~\citet{taillard1993benchmarks} and~\citet{lawrence1984resouce} one conflict graph was generated for each value of $p$.

For the generalized job shop instances, we considered the following classes of sizes $(n,m) \in \{(5,3),\allowbreak (8,5),\allowbreak (10,5),\allowbreak (20,10),\allowbreak (50,15),\allowbreak (100,20)\}$. For each instance, the number of operations per job was chosen uniformly at random between \(2\) and \(\tfrac{5}{3}m\). Each operation was then assigned to a machine selected uniformly at random, subject to the constraint \(\mu_{ij} \neq \mu_{(i-1)j}\) for $j=1,\ldots,n$ and \(i = 2, \ldots, m_j\), ensuring that consecutive operations of the same job are processed on distinct machines. Processing times were drawn uniformly at random from the intervals \([1,20]\), \([1,50]\), and \([1,100]\). The conflict graphs were constructed in the same manner as for the previous instance sets. For each combination of parameters, five instances were generated.

In total, we obtained 240 instances derived from~\citet{taillard1993benchmarks}, 120 instances derived from \citet{lawrence1984resouce}, and 270 randomly generated instances.

The remainder of this section is organized as follows. Section~\ref{secexp:milp} reports the performance of the mathematical formulations presented in Section~\ref{sec:MILP}. Section~\ref{secexp:lb} reports the performance of the lower bounds introduced in Section~\ref{sec:LB}. In Sections~\ref{sec:expoperatorsGA},~\ref{sec:exphybridGA}, and~\ref{sec:expparamGA}, we fine-tune the GA and assess the contribution of each component to its performance. Finally, Section~\ref{sec:finalGAresults} evaluates the overall performance of the GA variants against the constructive heuristics.

\subsection{Performance of the mathematical models}\label{secexp:milp}
This section evaluates the performance of the mathematical formulations presented in Section~\ref{sec:MILP}. As mentioned previously, a time limit of one hour was imposed for each run. Table~\ref{tab:milps} presents the results for the randomly generated instances with $(n,m)\in\{(5,3),(8,5)\}$. Larger instances are not reported, as several runs resulted in out-of-memory errors under the 32~GiB memory limit of the machine used for the experiments. For the time-indexed formulations, $T$ is set to the sum of the processing times of all operations.

Table~\ref{tab:milps} reports, for each formulation, the number of instances solved to optimality ($\mathrm{Opt}$), the computational time in seconds ($\mathrm{Time}$), and the optimality gap at termination ($\mathrm{Gap}$) reported by Gurobi. The relative optimality gap is computed as $\frac{|UB-LB|}{|UB|}\times 100$,
where $UB$ denotes the incumbent objective value and $LB$ the corresponding lower bound.

The results in Table~\ref{tab:milps} show that $\mFB{}$ provides the best overall performance among the four formulations. It solves 49 out of 90 instances to optimality, with an average optimality gap of $0.276\%$. In comparison, $\mFA{}$, $\mFC{}$, and $\mFD{}$ solve 47, 23, and 24 instances to optimality, with average gaps of $0.286\%$, $0.427\%$, and $0.468\%$, respectively. Overall, the precedence-based formulations, $\mFA{}$ and $\mFB{}$, clearly outperform the time-indexed formulations, $\mFC{}$ and $\mFD{}$, particularly for instances with larger processing-time intervals. This can be partly attributed to the sensitivity of the time-indexed formulations to the value of $T$, which depends on the processing times. As $T$ increases, particularly for instances with large $p_{ij}$ values, the number of decision variables and constraints increases accordingly, resulting in larger models.

Comparing the two time-indexed formulations, $\mFD{}$ performs better on small instances with low conflict densities, whereas $\mFC{}$ becomes better as the instance size and density increase. This suggests that, despite generating more constraints, the tighter pairwise conflict constraints in $\mFC{}$~\eqref{F2:conflict4} provide a stronger formulation than the degree-based aggregated constraints in $\mFD{}$~\eqref{F3:conflict}, particularly for larger and denser instances.

Among the precedence-based formulations, $\mFB{}$ outperforms $\mFA{}$ on most instances. This indicates that the additional equality constraints in $\mFB{}$ strengthen the formulation, despite the introduction of symmetric variables.

Overall, the results indicate that larger instance sizes and higher conflict densities generally lead to larger mathematical models and, consequently, increase the computational difficulty of the formulations. The processing-time range has a more pronounced effect on the time-indexed formulations, for which the number of variables and constraints grows with the discretized time horizon.

\begin{table}[H]
\centering
\small
\caption{Computational results of the mathematical formulations.}
\label{tab:milps}
\setlength{\tabcolsep}{3pt}
\begin{tabular}{rrr rrr rrr rrr rrr}
\toprule
\multirow{2}{*}{Size} & \multirow{2}{*}{$p$} & \multirow{2}{*}{$p_{ij}\in$} 
& \multicolumn{3}{c}{\mFA} 
& \multicolumn{3}{c}{\mFB} 
& \multicolumn{3}{c}{\mFC} 
& \multicolumn{3}{c}{\mFD} \\
\cmidrule(lr){4-6} \cmidrule(lr){7-9} \cmidrule(lr){10-12} \cmidrule(lr){13-15}
 &  &  & Opt & Time & Gap & Opt & Time & Gap & Opt & Time & Gap & Opt & Time & Gap \\
\midrule

\multirow{9}{*}{\shortstack{$n=5$ \\ $m=3$}} 
& \multirow{3}{*}{0.2} 
& [1,20]   
& \textbf{5} & \textbf{0.082} & \textbf{0.00} 
& \textbf{5} & 0.089 & \textbf{0.00} 
& \textbf{5} & 70.06 & \textbf{0.00} 
& \textbf{5} & 70.06 & \textbf{0.00} \\

& & [1,50]  
& \textbf{5} & 0.121 & \textbf{0.00} 
& \textbf{5} & \textbf{0.088} & \textbf{0.00} 
& \textbf{5} & 74.65 & \textbf{0.00} 
& \textbf{5} & 74.65 & \textbf{0.00} \\

& & [1,100] 
& \textbf{5} & \textbf{0.060} & \textbf{0.00} 
& \textbf{5} & \textbf{0.041} & \textbf{0.00} 
& 3 & 2944.2 & 0.07 
& 4 & 2944.2 & 0.03 \\
\cmidrule(lr){2-15}

& \multirow{3}{*}{0.5} 
& [1,20]   
& \textbf{5} & 4.62 & \textbf{0.00} 
& \textbf{5} & \textbf{3.75} & \textbf{0.00} 
& 1 & 3424.9 & 0.27 
& 4 & 3424.9 & 0.07 \\

& & [1,50]  
& \textbf{5} & \textbf{20.53} & \textbf{0.00} 
& \textbf{5} & 24.20 & \textbf{0.00} 
& 2 & 4151.7 & 0.21 
& 2 & 4151.7 & 0.21 \\

& & [1,100] 
& \textbf{5} & 55.74 & \textbf{0.00} 
& \textbf{5} & \textbf{33.16} & \textbf{0.00} 
& 1 & 4078.3 & 0.34 
& 1 & 3375.6 & 0.33 \\
\cmidrule(lr){2-15}

& \multirow{3}{*}{0.8} 
& [1,20]  
& 2 & 2950.3 & 0.31 
& \textbf{3} & \textbf{2573.7} & 0.21 
& \textbf{3} & 2713.7 & \textbf{0.15} 
& \textbf{3} & 2713.7 & \textbf{0.15} \\

& & [1,50]  
& \textbf{3} & \textbf{3385.8} & 0.16 
& \textbf{3} & 3389.9 & \textbf{0.11} 
& 0 & 6918.3 & 0.42 
& 0 & 6918.3 & 0.45 \\

& & [1,100] 
& \textbf{2} & 2323.6 & 0.36 
& \textbf{2} & \textbf{2066.2} & \textbf{0.35} 
& 0 & 4467.2 & 0.53 
& 0 & 4467.2 & 0.56 \\
\midrule

\multirow{9}{*}{\shortstack{$n=8$\\ $m=5$}} 
& \multirow{3}{*}{0.2} 
& [1,20]   
& \textbf{5} & 1024.0 & \textbf{0.00} 
& \textbf{5} & \textbf{987.8} & \textbf{0.00} 
& 1 & 6993.8 & 0.26 
& 0 & 6993.8 & 0.32 \\

& & [1,50]  
& \textbf{2} & 4244.3 & \textbf{0.26} 
& \textbf{2} & \textbf{3975.1} & 0.27 
& 0 & 4334.3 & 0.41 
& 0 & 4334.3 & 0.41 \\

& & [1,100] 
& 2 & 4152.9 & 0.22 
& \textbf{3} & 3843.7 & \textbf{0.17} 
& 0 & \textbf{1800.0}& 0.69 
& 0 & \textbf{1800.0} & 0.85 \\
\cmidrule(lr){2-15}

& \multirow{3}{*}{0.5} 
& [1,20]   
& \textbf{0} & 5067.2 & 0.60 
& \textbf{0} & \textbf{4800.6} & 0.60 
& \textbf{0} & 5294.2 & \textbf{0.45} 
& \textbf{0} & 5294.2 & 0.53 \\

& & [1,50]  
& \textbf{1} & 4472.8 & 0.45 
& \textbf{1} & 4309.7 & \textbf{0.44} 
& 0 & \textbf{2488.0} & 0.50 
& 0 & \textbf{2488.0} & 0.86 \\

& & [1,100] 
& \textbf{0} & 6679.0 & \textbf{0.53}
& \textbf{0} & 6273.3 & 0.55 
& \textbf{0} & \textbf{1800.9} & 1.00
& \textbf{0} & \textbf{1800.9} & 1.00 \\
\cmidrule(lr){2-15}

& \multirow{3}{*}{0.8} 
& [1,20]  
& 0 & 3778.7 & 0.75 
& 0 & \textbf{3179.0} & 0.75 
& \textbf{2} & 4490.8 & \textbf{0.39} 
& 0 & 4490.8 & 0.65 \\

& & [1,50]  
& \textbf{0} & 3375.0 & \textbf{0.76} 
& \textbf{0} & 3104.2 & 0.77 
& \textbf{0} & \textbf{1800.1} & 1.00
& \textbf{0} & \textbf{1800.1} & 1.00\\

& & [1,100] 
& \textbf{0} & 3319.1 & \textbf{0.74} 
& \textbf{0} & 2944.0 & \textbf{0.74} 
& \textbf{0} & \textbf{1801.0} & 1.00
& \textbf{0} & \textbf{1801.0} & 1.00 \\
\bottomrule
\end{tabular}
\end{table}

\subsection{Performance of the lower bounds}\label{secexp:lb}

The results for the lower bounds presented in Section~\ref{sec:LB} are reported in Tables~\ref{tab:lawrence_bounds}--\ref{tab:gjsb_interval}. We also considered the LP relaxations of the MILP formulations. However, for medium and large size instances, the solver frequently ran out of memory, making it impossible to retrieve the corresponding bounds. Therefore, the MILP-based lower bounds are not considered in the following computational results.

For each instance, the best lower bound is defined as $BestLB=\max_i\{LB_i\}$ over the three lower bounds. The deviation of a lower bound $LB_i$ from the best lower bound is computed as $((BestLB-LB_i)/BestLB)\times 100$. The tables report the mean (Mean) and standard deviation (Std) of the lower-bound values, the percentage of instances for which each method provides the best bound (\%Best), the average percentage deviation from the best lower bound (AvgDev (\%)), and the average computational time (AvgTime (s)). 

\begin{table}[H]
  \centering
  \caption{Performance of the lower bounds on instances derived from Lawrence by conflict-graph density.}
  \label{tab:lawrence_bounds}
  \small
  \begin{tabular}{llrrrrr}
    \toprule
    \multicolumn{1}{c}{$p$} & \multicolumn{1}{c}{Lower bound} & \multicolumn{1}{c}{Mean} & \multicolumn{1}{c}{Std} & \multicolumn{1}{c}{\%Best} & \multicolumn{1}{c}{AvgDev (\%)} & \multicolumn{1}{c}{AvgTime (s)} \\
    \midrule
    \multirow{3}{*}{0.2} & $LB_1$ & 1038.95 & 355.94 & 52.5 & 14.54 & 0.00062 \\
     & $LB_2$ & 1124.30 & 506.81 & 50.0 & 12.58 & 0.00017 \\
     & $LB_3$ & 1112.08 & 508.70 & 45.0 & 13.64 & 0.00017 \\
    \midrule
    \multirow{3}{*}{0.5} & $LB_1$ & 1038.95 & 355.94 & 0.0 & 51.74 & 0.00060 \\
     & $LB_2$ & 2391.97 & 1017.79 & 95.0 & 0.24 & 0.00016 \\
     & $LB_3$ & 2382.38 & 1016.09 & 92.5 & 0.58 & 0.00015 \\
    \midrule
    \multirow{3}{*}{0.8} &  $LB_1$ & 1038.95 & 355.94 & 0.0 & 80.47 & 0.00065 \\
     & $LB_2$ & 6198.52 & 3166.45 & 90.0 & 0.08 & 0.00019 \\
     & $LB_3$ & 6202.95 & 3168.12 & 100.0 & 0.00 & 0.00017 \\
    \bottomrule
  \end{tabular}
\end{table}

\begin{table}[H]
  \centering
  \caption{Performance of the lower bounds on instances derived from Taillard by conflict-graph density.}
  \label{tab:taillard_bounds}
  \small
  \begin{tabular}{llrrrrr}
    \toprule
    \multicolumn{1}{c}{$p$} & \multicolumn{1}{c}{Lower bound} & \multicolumn{1}{c}{Mean} & \multicolumn{1}{c}{Std} & \multicolumn{1}{c}{\%Best} & \multicolumn{1}{c}{AvgDev (\%)} & \multicolumn{1}{c}{AvgTime (s)} \\
    \midrule
    \multirow{3}{*}{0.2} & $LB_1$ & 2227.09 & 1366.05 & 30.0 & 22.41 & 0.00082 \\
     & $LB_2$& 2386.29 & 684.61 & 67.5 & 8.51 & 0.00073 \\
     & $LB_3$ & 2383.74 & 694.95 & 66.2 & 8.79 & 0.00073 \\
    \midrule
    \multirow{3}{*}{0.5} & $LB_1$ & 2227.09 & 1366.05 & 0.0 & 65.33 & 0.00086 \\
     & $LB_2$ & 6013.19 & 1821.36 & 83.8 & 1.06 & 0.00077 \\
     & $LB_3$ & 6042.59 & 1875.75 & 88.8 & 0.79 & 0.00081 \\
    \midrule
    \multirow{3}{*}{0.8} & $LB_1$ & 2227.09 & 1366.05 & 0.0 & 90.29 & 0.00085 \\
     & $LB_2$ & 22670.21 & 12286.37 & 75.0 & 0.19 & 0.00077 \\
     & $LB_3$ & 22700.03 & 12324.53 & 92.5 & 0.12 & 0.00080 \\
    \bottomrule
  \end{tabular}
\end{table}

\begin{table}[H]
  \centering
  \caption{Performance of the lower bounds on the generalized job shop instances by conflict-graph density.}
  \label{tab:gjsb_density}
  \small
  \begin{tabular}{llrrrrr}
    \toprule
    \multicolumn{1}{c}{$p$} & \multicolumn{1}{c}{Lower bound} & \multicolumn{1}{c}{Mean} & \multicolumn{1}{c}{Std} & \multicolumn{1}{c}{\%Best} & \multicolumn{1}{c}{AvgDev (\%)} & \multicolumn{1}{c}{AvgTime (s)} \\
    \midrule
    \multirow{3}{*}{0.2} & $LB_1$ & 2095.15 & 2205.23 & 83.3 & 2.47 & 0.00077 \\
     & $LB_2$ & 1484.32 & 1323.04 & 15.0 & 21.98 & 0.00088 \\
     & $LB_3$ & 1463.33 & 1316.12 & 15.0 & 22.94 & 0.00082 \\
    \midrule
    \multirow{3}{*}{0.5} & $LB_1$ & 2150.95 & 2179.49 & 0.0 & 39.98 & 0.00081 \\
     & $LB_2$ & 3502.70 & 3386.73 & 86.7 & 1.22 & 0.00092 \\
     & $LB_3$ & 3508.92 & 3387.69 & 85.0 & 1.55 & 0.00087 \\
    \midrule
    \multirow{3}{*}{0.8} & $LB_1$ & 2127.42 & 2247.62 & 0.0 & 78.91 & 0.00081 \\
     & $LB_2$ & 11841.12 & 12967.79 & 66.7 & 0.49 & 0.00093 \\
     & $LB_3$ & 11772.80 & 12851.31 & 78.3 & 0.63 & 0.00077 \\
    \bottomrule
  \end{tabular}
\end{table}

As can be observed from Tables~\ref{tab:lawrence_bounds}--\ref{tab:gjsb_interval}, the performance of the lower bounds is strongly influenced by the density of the conflict graph. $LB_1$ is competitive for sparse conflict graphs, particularly on the randomly generated instances with $p=0.2$, but its relative quality deteriorates considerably as the density increases. This behaviour is expected since $LB_1$ does not take the conflict structure into account. In contrast, $LB_2$ and $LB_3$ provide substantially tighter bounds for medium and high density conflict graphs, with average deviations from the best lower bound generally below $1\%$ for $p=0.5$ and $p=0.8$. This confirms the importance of incorporating the conflict-graph structure when deriving lower bounds.

\begin{table}[H]
  \centering
  \caption{Performance of the lower bounds by processing-time interval and conflict-graph density on the generalized job shop instances.}
  \label{tab:gjsb_interval}
  \small
  \begin{tabular}{lllrrrrr}
    \toprule
    \multicolumn{1}{c}{Interval} & \multicolumn{1}{c}{$p$} & \multicolumn{1}{c}{Lower bound} & \multicolumn{1}{c}{Mean} & \multicolumn{1}{c}{Std} & \multicolumn{1}{c}{\%Best} & \multicolumn{1}{c}{AvgDev (\%)} & \multicolumn{1}{c}{AvgTime (s)} \\
    \midrule
    \multirow{9}{*}{$[1,20]$} & \multirow{3}{*}{0.2} & $LB_1$ & 773.40 & 587.22 & 85.0 & 2.41 & 0.00077 \\
     &  & $LB_2$ & 561.05 & 354.34 & 15.0 & 22.66 & 0.00072 \\
     &  & $LB_3$ & 561.05 & 354.34 & 15.0 & 22.66 & 0.00082 \\
    \cmidrule{2-8}
     & \multirow{3}{*}{0.5} & $LB_1$ & 778.30 & 606.42 & 0.0 & 41.79 & 0.00082 \\
     &  & $LB_2$ & 1292.25 & 892.55 & 85.0 & 1.94 & 0.00088 \\
     &  & $LB_3$ & 1303.95 & 919.90 & 85.0 & 2.21 & 0.00084 \\
    \cmidrule{2-8}
     & \multirow{3}{*}{0.8} & $LB_1$ & 755.75 & 562.65 & 0.0 & 79.41 & 0.00079 \\
     &  & $LB_2$ & 4370.25 & 3621.77 & 65.0 & 0.47 & 0.00091 \\
     &  & $LB_3$ & 4352.85 & 3589.56 & 80.0 & 0.44 & 0.00081 \\
    \midrule
    \multirow{9}{*}{$[1,50]$} & \multirow{3}{*}{0.2} & $LB_1$ & 1836.30 & 1393.65 & 90.0 & 1.64 & 0.00073 \\
     &  & $LB_2$ & 1320.65 & 790.76 & 10.0 & 20.69 & 0.00084 \\
     &  & $LB_3$ & 1281.65 & 763.41 & 10.0 & 22.63 & 0.00083 \\
    \cmidrule{2-8}
     & \multirow{3}{*}{0.5} & $LB_1$ & 1912.85 & 1444.48 & 0.0 & 39.86 & 0.00073 \\
     &  & $LB_2$ & 3060.60 & 2092.82 & 85.0 & 1.54 & 0.00087 \\
     &  & $LB_3$ & 3140.40 & 2223.92 & 85.0 & 0.89 & 0.00079 \\
    \cmidrule{2-8}
     & \multirow{3}{*}{0.8} & $LB_1$ & 1891.85 & 1467.58 & 0.0 & 78.52 & 0.00077 \\
     &  & $LB_2$ & 10587.60 & 8618.15 & 80.0 & 0.13 & 0.00112 \\
     &  & $LB_3$ & 10527.85 & 8596.72 & 70.0 & 0.57 & 0.00078 \\
    \midrule
    \multirow{9}{*}{$[1,100]$} & \multirow{3}{*}{0.2} & $LB_1$ & 3675.75 & 2876.57 & 75.0 & 3.35 & 0.00082 \\
     &  & $LB_2$ & 2571.25 & 1586.08 & 20.0 & 22.59 & 0.00108 \\
     &  & $LB_3$ & 2547.30 & 1594.36 & 20.0 & 23.54 & 0.00081 \\
    \cmidrule{2-8}
     & \multirow{3}{*}{0.5} & $LB_1$ & 3761.70 & 2742.67 & 0.0 & 38.29 & 0.00088 \\
     &  & $LB_2$ & 6155.25 & 4205.40 & 90.0 & 0.18 & 0.00101 \\
     &  & $LB_3$ & 6082.40 & 4196.80 & 85.0 & 1.55 & 0.00096 \\
    \cmidrule{2-8}
     & \multirow{3}{*}{0.8} & $LB_1$ & 3734.65 & 2908.39 & 0.0 & 78.81 & 0.00088 \\
     &  & $LB_2$ & 20565.50 & 17154.48 & 55.0 & 0.87 & 0.00075 \\
     &  & $LB_3$ & 20437.70 & 16954.28 & 85.0 & 0.88 & 0.00073 \\
    \bottomrule
  \end{tabular}
\end{table}

The results also show that $LB_2$ and $LB_3$ exhibit very similar performance overall, although $LB_3$ generally provides a slight advantage on the largest and densest instances. Overall, $LB_2$ and $LB_3$ yield the best bound on $72.31\%$ and $75.09\%$ of the instances, respectively. Their average deviations from the best lower bound are $4.95\%$ for $LB_2$ and $5.30\%$ for $LB_3$. Both bounds rely on the same greedy principle: a vertex is selected according to a specific rule, after which the selected vertex and its neighbors are removed from the graph. This process is repeated until no vertex remains, and the selected vertices form the independent set used to compute the lower bound. The difference between $LB_2$ and $LB_3$ lies in the vertex-selection rule. In $LB_2$, the selected vertex maximizes the function $w(v)/(d_{G_i}(v)+1)$, where $w(v)$ is the weight of vertex $v$ and $d_{G_i}(v)$ is its degree in the graph $G_i$ at the $i^{\textrm{th}}$ iteration. In contrast, in $LB_3$, the selected vertex maximizes the function $w(v)/(\sum_{u \in N^{+}_{G_i}(v)}w(u))$, where $N^{+}_{G_i}(v)$ denotes the neighborhood of $v$ including $v$ itself.

\subsection{Sensitivity analysis of core genetic algorithm operators} \label{sec:expoperatorsGA}
We first evaluate the main components of the proposed GA, namely the schedule evaluation strategy (active and non-delay schedules), the crossover operator (one-point, two-point, and mask crossover), and the mutation operator (swap and insertion). These experiments are conducted on representative subsets from the three instance sets, considering different values of $n$, $m$, and $p$. The results are summarized in Tables~\ref{tab:evaluation_type},~\ref{tab:crossover}, and~\ref{tab:best_config}.

\subsubsection{Impact of the evaluation strategy}
Table~\ref{tab:evaluation_type} compares the active and non-delay schedule evaluation strategies across the three instance sets and conflict graph densities. 

\begin{table}[H]
\centering
\begin{threeparttable}
\caption{Comparison of active and non-delay schedule evaluation on the considered instance sets.}
\label{tab:evaluation_type}
\begin{tabular}{llrrrrrr}
\toprule
& & \multicolumn{3}{c}{Active schedules (Algorithm~\ref{active_schedule})} 
& \multicolumn{3}{c}{Non-delay schedules (Algorithm~\ref{non_delay_schedule})} \\
\cmidrule(lr){3-5} \cmidrule(lr){6-8}
Instance set & $p$ & Mean & Std & Time (s) & Mean & Std & Time (s) \\
\midrule
\multirow{3}{*}{Lawrence-derived}
& 0.2 & 1392.4 & 724.3 & 8.55 & \textbf{1388.9} & 721.0 & 4.37 \\
& 0.5 & 2602.0 & 1443.8 & 9.96 & \textbf{2568.7} & 1429.0 & 4.56 \\
& 0.8 & \textbf{5835.4} & 3400.6 & 8.11 & 5835.7 & 3400.9 & 4.38 \\
\midrule
\multirow{3}{*}{Taillard-derived}
& 0.2 & 3652.4 & 1580.2 & 117.94 & \textbf{3568.2} & 1499.1 & 54.31 \\
& 0.5 & 8750.3 & 4892.9 & 142.24 & \textbf{8378.3} & 4622.6 & 84.62 \\
& 0.8 & 24435.7 & 13762.8 & 79.24 & \textbf{24416.2} & 13712.4 & 45.37 \\
\midrule
\multirow{3}{*}{Generalized job shop}
& 0.2 & 953.7 & 720.6 & 219.82 & \textbf{923.6} & 688.5 & 103.97 \\
& 0.5 & 2200.6 & 1938.2 & 275.14 & \textbf{2095.6} & 1821.5 & 165.74 \\
& 0.8 & 5238.2 & 4581.8 & 165.81 & \textbf{5186.7} & 4506.1 & 93.47 \\
\bottomrule
\end{tabular}
\end{threeparttable}
\end{table}

For the Lawrence-derived instances, the two strategies provide comparable solution quality, with no statistically significant difference ($p\text{-value}=0.5476$). The non-delay evaluation yields slightly lower average objective values for $p=0.2$ and $p=0.5$, while the active evaluation is marginally better for $p=0.8$. However, the non-delay evaluation consistently requires substantially less computational time, approximately half that of the active evaluation.

A similar pattern is observed for the Taillard-derived instances. The non-delay evaluation achieves lower average objective values at all density levels, although the difference is not statistically significant ($p\text{-value}=0.1529$). In contrast, its computational time is reduced by approximately 32--43\% compared with the active evaluation.

For the generalized job shop instances, the non-delay evaluation consistently outperforms the active evaluation in terms of average objective value, with an average improvement of $2.22\%$ across the three density levels. This difference is statistically significant ($p\text{-value}<0.05$). Moreover, the non-delay evaluation reduces computational time by approximately 45--55\%.

Overall, the results indicate that non-delay evaluation provides solution quality comparable to that of active evaluation for the Lawrence- and Taillard-derived instances, while requiring substantially less computational effort. For the generalized job shop instances, it also yields a statistically significant improvement in solution quality. These observations are consistent with the established properties of active and non-delay schedules in the scheduling literature. Non-delay schedules tend to produce tighter schedules by preventing machine idle time when an operation is available. However, restricting the search to non-delay schedules may exclude optimal solutions, whereas for the makespan objective, every optimal schedule is active. Thus, the use of non-delay evaluation represents a trade-off between potentially restricting the solution space and reducing computational effort. 

\subsubsection{Impact of crossover operators}

Table~\ref{tab:crossover} compares the performance of the three crossover operators on the considered instance sets and conflict graph densities.
\begin{table}[H]
\centering
\begin{threeparttable}
\caption{Comparison of crossover operators on the considered instance sets.}
\label{tab:crossover}
\begin{tabular}{llrrrrrr}
\toprule
& & \multicolumn{2}{c}{One-Point} & \multicolumn{2}{c}{Two-Point} & \multicolumn{2}{c}{Mask} \\
\cmidrule(lr){3-4} \cmidrule(lr){5-6} \cmidrule(lr){7-8}
Instance set & $p$ & Mean & Std & Mean & Std & Mean & Std \\
\midrule
\multirow{3}{*}{Lawrence-derived}
& 0.2 & 1395.1 & 724.1 & 1392.2 & 722.7 & \textbf{1384.7} & 722.6 \\
& 0.5 & 2589.2 & 1440.6 & 2587.2 & 1438.8 & \textbf{2579.7} & 1432.7 \\
& 0.8 & 5835.6 & 3402.8 & 5835.5 & 3402.8 & \textbf{5835.5} & 3402.7 \\
\midrule
\multirow{3}{*}{Taillard-derived}
& 0.2 & 3601.3 & 1522.7 & \textbf{3597.8} & 1526.4 & 3631.9 & 1574.8 \\
& 0.5 & 8540.5 & 4726.5 & \textbf{8525.6} & 4726.1 & 8626.8 & 4843.2 \\
& 0.8 & 24431.7 & 13741.5 & \textbf{24417.4} & 13719.0 & 24428.8 & 13773.8 \\
\midrule
\multirow{3}{*}{Generalized job shop}
& 0.2 & 934.9 & 700.5 & \textbf{932.6} & 698.7 & 948.5 & 716.1 \\
& 0.5 & 2140.7 & 1872.7 & \textbf{2137.4} & 1870.6 & 2166.2 & 1902.9 \\
& 0.8 & 5208.8 & 4536.9 & \textbf{5202.3} & 4533.0 & 5226.3 & 4567.2 \\
\bottomrule
\end{tabular}
\end{threeparttable}
\end{table}

For the Lawrence-derived instances, the mask crossover consistently achieves the lowest mean objective value across all density levels. Although the differences are relatively small, the results suggest that preserving distributed positional information through mask-based recombination may be slightly better suited to the structured characteristics of these instances. For the Taillard-derived  and generalized job shop instances, the two-point crossover consistently provides the lowest mean objective value across all density levels. The improvement over the other operators is modest but stable, suggesting that exchanging contiguous chromosome segments may better preserve high-quality building blocks while maintaining sufficient diversity during the search. Overall, the three crossover operators tend to exhibit similar performance across the different instance sets and density levels. Although small differences can be observed for specific instances, none of the operators shows a substantial or consistent advantage over the others.

\subsubsection{Selection of GA configuration}

The two mutation operators, swap and insertion, were also compared, with no statistically significant difference observed between them, indicating comparable performance. Table~\ref{tab:best_config} summarizes the best-performing combination of the three GA components studied in this section for each instance set and conflict graph density, based on the average objective value.

\begin{table}[H]
\centering
\begin{threeparttable}
\caption{Best GA configurations for the considered instance sets at different conflict-graph densities.}
\label{tab:best_config}
\begin{tabular}{llrrrr}
\toprule
Instance set & $p$ & Evaluation & Crossover & Mutation & Mean \\
\midrule
\multirow{3}{*}{Lawrence-derived}
& 0.2 & Non-delay & Mask & Swap & 1381.2 \\
& 0.5 & Non-delay & Mask & Insertion & 2562.0 \\
& 0.8 & Active & Two-point & Insertion & 5835.1 \\
\midrule
\multirow{3}{*}{Taillard-derived}
& 0.2 & Non-delay & Two-point & Insertion & 3552.4 \\
& 0.5 & Non-delay & Two-point & Insertion & 8346.7 \\
& 0.8 & Non-delay & Two-point & Swap & 24407.4 \\
\midrule
\multirow{3}{*}{Generalized job shop}
& 0.2 & Non-delay & Two-point & Insertion & 916.7 \\
& 0.5 & Non-delay & Two-point & Insertion & 2084.1 \\
& 0.8 & Non-delay & Two-point & Insertion & 5175.3 \\
\bottomrule
\end{tabular}
\end{threeparttable}
\end{table}

The selected configurations show a consistent preference for non-delay evaluation and, to a lesser extent, for the two-point crossover. In particular, non-delay evaluation is selected for eight of the nine instance-set--density combinations, while two-point crossover is selected for six. The mask crossover is selected for the remaining Lawrence-derived configurations, whereas the active evaluation is selected only once. These choices are broadly consistent with the results of the preceding sensitivity analyses.

\subsection{Impact of hybrid evaluation strategy on GA performance} \label{sec:exphybridGA}
The previous experiments showed that the non-delay evaluation generally provides the best trade-off between solution quality and computational efficiency. However, limiting the search to non-delay schedules may prevent the algorithm from reaching optimal schedules. To overcome this limitation, we consider a hybrid evaluation strategy that combines active and non-delay evaluations, thereby expanding the set of schedules explored by the GA.

\begin{table}[H]
\centering
\caption{Impact of the hybrid evaluation strategy on the GA performance.}
\label{tab:summary_all}
\resizebox{\textwidth}{!}{%
\begin{tabular}{llrrrrrrr}
\toprule
Instance set & $p$ & $p_{\mathrm{active}}=0.0$ & $p_{\mathrm{active}}=0.1$ &
$p_{\mathrm{active}}=0.2$ & $p_{\mathrm{active}}=0.5$ &
$p_{\mathrm{active}}=0.8$ & $p_{\mathrm{active}}=0.9$ &
$p_{\mathrm{active}}=1.0$ \\
\midrule

Lawrence-derived & 0.2 & \textbf{1397.7} & 1388.4 & 1392.9 & 1386.3 & 1395.2 & 1394.7 & 1401.7 \\
& 0.5 & 2576.8 & 2574.8 & 2575.1 & \textbf{2570.7} & 2596.6 & 2603.1 & 2605.7 \\
& 0.8 & 5835.8 & 5835.7 & 5835.6 & 5835.6 & 5835.8 & \textbf{5835.5} & 5835.9 \\
\midrule

Taillard-derived & 0.2 & \textbf{3559.2} & 3562.6 & 3574.4 & 3562.2 & 3613.4 & 3629.3 & 3654.4 \\
& 0.5 & 8368.9 & 8368.4 & 8378.2 & \textbf{8353.5} & 8524.9 & 8591.4 & 8745.7 \\
& 0.8 & 24422.7 & 24424.8 & 24410.2 & \textbf{24393.4} & 24410.1 & 24436.7 & 24437.0 \\
\midrule

Generalized job shop & 0.2 & 922.4 & 923.2 & 923.3 & \textbf{918.4} & 935.4 & 940.9 & 950.1 \\
& 0.5 & 2090.6 & 2090.3 & 2091.7 & \textbf{2086.3} & 2120.3 & 2130.9 & 2195.4 \\
& 0.8 & 5179.6 & 5183.1 & 5179.7 & \textbf{5171.1} & 5203.1 & 5212.7 & 5233.3 \\
\bottomrule
\end{tabular}%
}
\end{table}

Based on the previous analysis, the two-point crossover and insertion mutation are fixed for this stage. The remaining GA parameters are set to a population size of $N_p=100$, a crossover probability of $p_c=0.8$, a mutation probability of $p_m=0.2$, and a maximum of 1000 generations. The experiments are conducted on the same instance sets as in the previous stage.

Let $p_{\mathrm{active}}$ denote the probability of evaluating an offspring using the active schedule. Thus, $p=0$ corresponds to the standard non-delay evaluation, whereas $p=1$ corresponds to the exclusive use of active evaluation. Intermediate values probabilistically combine the two evaluation strategies during the evolutionary process.

Table~\ref{tab:summary_all} reports the average makespan for different values of $p_{\mathrm{active}}$. Overall, moderate values of $p_{\mathrm{active}}$ tend to provide the best performance, while relying heavily on active evaluation generally leads to poorer solution quality. In particular, $p_{\mathrm{active}}=0.5$ provides a competitive and often superior compromise across the three instance sets. The benefit of hybridization is particularly apparent for the generalized job shop instances, where $p_{\mathrm{active}}=0.5$ consistently provides the best results. For the Lawrence- and Taillard-derived instances, the differences between low and moderate values of $p_{\mathrm{active}}$ are generally small. This may indicate that instances allowing multiple operations per machine are more likely to benefit from expanding the search beyond non-delay schedules, as optimal schedules are not necessarily non-delay. Based on these observations, we use $p_{\mathrm{active}}=0.5$ in the subsequent experiments.

\subsection{Tuning the numerical GA parameters} \label{sec:expparamGA}
In this section, we tune the numerical parameters of the genetic algorithm, namely the population size, mutation rate, and crossover rate. Based on the previous experiments, the search operators and evaluation strategy are fixed throughout this stage. Specifically, we use the two-point crossover, insertion mutation, and hybrid evaluation strategy with the probability of active evaluation set to $p_{\mathrm{active}}=0.5$, which provided the best overall performance. The experiments are conducted on the same instance sets as those considered in the previous stages.

\begin{table}[H]
\centering
\caption{Impact of population size on the GA performance.}
\label{tab:pop_size_effect}
\begin{tabular}{lrrrrr}
\toprule
Instance set & $N_p=20$ & $N_p=50$ & $N_p=100$ & $N_p=200$ & $N_p=400$ \\
\midrule
Lawrence-derived & 3291.45 & 3280.41 & 3273.09 & 3266.27 & 3257.84 \\
Taillard-derived & 12222.78 & 12174.01 & 12131.06 & 12092.53 & 12049.58 \\
Generalized job shop & 2758.22 & 2744.90 & 2733.33 & 2721.30 & 2708.71 \\
\bottomrule
\end{tabular}
\end{table}

\begin{table}[H]
\centering
\caption{Impact of mutation probability on the GA performance.}
\label{tab:mut_rate_effect}
\begin{tabular}{lrrrr}
\toprule
Instance set & $p_m=0.05$ & $p_m=0.1$ & $p_m=0.2$ & $p_m=0.5$ \\
\midrule
Lawrence-derived & 3277.51 & 3274.18 & 3269.70 & 3264.15 \\
Taillard-derived & 12162.92 & 12144.59 & 12117.35 & 12090.07 \\
Generalized job shop & 2743.71 & 2737.39 & 2728.24 & 2718.60 \\
\bottomrule
\end{tabular}
\end{table}

\begin{table}[H]
\centering
\caption{Impact of crossover probability on the GA performance.}
\label{tab:cross_rate_effect}
\begin{tabular}{lrrr}
\toprule
Instance set & $p_c=0.7$ & $p_c=0.8$ & $p_c=0.9$ \\
\midrule
Lawrence-derived & 3273.16 & 3271.90 & 3271.17 \\
Taillard-derived & 12137.15 & 12131.17 & 12125.47 \\
Generalized job shop & 2735.79 & 2733.08 & 2730.20 \\
\bottomrule
\end{tabular}
\end{table}

\begin{table}[H]
\centering
\caption{Kruskal--Wallis test results for the significance of the numerical GA parameters.}
\label{tab:kruskal_wallis}
\small
\begin{tabular}{llcccc}
\toprule
Instance set & $p$ & $H$-statistic & $p$-value & $\eta^2$ & Significant \\
\midrule

\multicolumn{6}{l}{\textit{Population size ($N_p$)}} \\
\midrule
Overall & -- & 7.68 & 0.1042 & 0.0001 & No \\

\multirow{3}{*}{Lawrence-derived}
& 0.2 & 72.53 & $<0.0001$ & 0.0163 & Yes \\
& 0.5 & 30.51 & $<0.0001$ & 0.0063 & Yes \\
& 0.8 & 0.28 & 0.9908 & 0.0000 & No \\

\multirow{3}{*}{Taillard-derived}
& 0.2 & 39.77 & $<0.0001$ & 0.0075 & Yes \\
& 0.5 & 47.05 & $<0.0001$ & 0.0090 & Yes \\
& 0.8 & 7.63 & 0.1059 & 0.0008 & No \\

\multirow{3}{*}{Generalized job shop}
& 0.2 & 39.59 & $<0.0001$ & 0.0049 & Yes \\
& 0.5 & 57.18 & $<0.0001$ & 0.0074 & Yes \\
& 0.8 & 23.74 & $<0.0001$ & 0.0027 & Yes \\

\midrule
\multicolumn{6}{l}{\textit{Mutation probability ($p_m$)}} \\
\midrule
Overall & -- & 1.66 & 0.6465 & 0.0000 & No \\

\multirow{3}{*}{Lawrence-derived}
& 0.2 & 10.32 & 0.0160 & 0.0017 & Yes \\
& 0.5 & 7.30 & 0.0629 & 0.0010 & No \\
& 0.8 & 0.03 & 0.9984 & 0.0000 & No \\

\multirow{3}{*}{Taillard-derived}
& 0.2 & 6.82 & 0.0780 & 0.0008 & No \\
& 0.5 & 9.56 & 0.0227 & 0.0014 & Yes \\
& 0.8 & 1.25 & 0.7408 & 0.0000 & No \\

\multirow{3}{*}{Generalized job shop}
& 0.2 & 10.46 & 0.0151 & 0.0010 & Yes \\
& 0.5 & 17.08 & 0.0007 & 0.0020 & Yes \\
& 0.8 & 4.88 & 0.1806 & 0.0003 & No \\

\midrule
\multicolumn{6}{l}{\textit{Crossover probability ($p_c$)}} \\
\midrule
Overall & -- & 0.06 & 0.9680 & 0.0000 & No \\

\multirow{3}{*}{Lawrence-derived}
& 0.2 & --- & $>0.05$ & 0.0000 & No \\
& 0.5 & --- & $>0.05$ & 0.0000 & No \\
& 0.8 & --- & $>0.05$ & 0.0000 & No \\

\multirow{3}{*}{Taillard-derived}
& 0.2 & --- & $>0.05$ & 0.0000 & No \\
& 0.5 & --- & $>0.05$ & 0.0000 & No \\
& 0.8 & --- & $>0.05$ & 0.0000 & No \\

\multirow{3}{*}{Generalized job shop}
& 0.2 & --- & $>0.05$ & 0.0000 & No \\
& 0.5 & --- & $>0.05$ & 0.0000 & No \\
& 0.8 & --- & $>0.05$ & 0.0000 & No \\

\bottomrule
\end{tabular}
\end{table}

The population size $N_p$ is varied over $\{20,50,100,200,400\}$, the mutation probability $p_m$ over $\{0.05,0.10,0.20,0.50\}$, and the crossover probability $p_c$ over $\{0.70,0.80,0.90\}$. Each parameter is evaluated independently while keeping the remaining parameters fixed.

For the population size, table~\ref{tab:pop_size_effect} reports the average objective values obtained for the considered population sizes. A consistent improvement is observed as the population size increases from 20 to 400 across all three instance sets, with $N_p=400$ yielding the lowest average objective value in each case. The overall Kruskal--Wallis test does not indicate a statistically significant effect of population size ($p\text{-value}=0.1042$). However, significant differences are observed for several instance set--density combinations (Table~\ref{tab:kruskal_wallis}). In particular, the Lawrence instances with $p=0.2$ and $0.5$, the Taillard instances with $p=0.2$ and $0.5$, and all generalized job shop instances except $p=0.8$ exhibit statistically significant differences. Nevertheless, the corresponding effect sizes are consistently small ($\eta^2<0.02$), indicating that the observed improvements associated with larger populations are relatively modest. Based on these results, $N_p=400$ is selected for the final configuration.

For the mutation probability, Table~\ref{tab:mut_rate_effect} reports the average objective values obtained for the considered mutation rates. Across all three instance sets, the average objective value decreases as the mutation rate increases, with $p_m=0.5$ consistently yielding the best average performance. The Kruskal--Wallis analysis, however, indicates that the effect is statistically significant only for a subset of instance set--density combinations (Table~\ref{tab:kruskal_wallis}). Significant differences are observed for the Lawrence instances with $p=0.2$, the Taillard instances with $p=0.5$, and the generalized job shop instances with $p=0.2$ and $0.5$. No statistically significant differences are observed for the remaining configurations. Moreover, the corresponding effect sizes are consistently very small, indicating that the practical impact of the mutation rate on solution quality is limited despite the observed numerical trend. Based on these results, $p_m=0.5$ is selected for the final configuration.

For the crossover probability, Table~\ref{tab:cross_rate_effect} reports the average objective values obtained for the considered crossover rates. Increasing the crossover rate from $p_c=0.7$ to $0.9$ results in a slight and consistent improvement across all three instance sets, with $p_c=0.9$ yielding the lowest average objective value in each case. However, the magnitude of these differences is small. This is confirmed by the Kruskal--Wallis analysis, which detects no statistically significant differences either overall or for any of the individual instance set--density combinations ($p\text{-value}>0.05$ in all cases; Table~\ref{tab:kruskal_wallis}). Thus, the crossover rate has a limited practical impact within the investigated range. Based on its consistently best average performance, $p_c=0.9$ is selected for the final configuration.

\subsection{Final computational evaluation} \label{sec:finalGAresults}
Following the parameter tuning described above, the final configuration of the proposed GA is evaluated on the complete set of instances. The GA uses the two-point crossover, insertion mutation, tournament selection with a tournament size of five, a population size of $400$, a crossover rate of $p_c=0.9$, a mutation rate of $p_m=0.5$, and the hybrid evaluation strategy with $p_{\mathrm{active}}=0.5$. The evolutionary process terminates after 10,000 generations or earlier if no improvement is observed for 100 consecutive generations. Each instance is solved 10 times, and the reported results are averaged over these runs.

The objectives of this section is are twofold: (i) to assess the performance of the GA against the constructive heuristics developed in Section~\ref{sec:warm_start}, and (ii) to evaluate the benefit of heuristic seeding in the evolutionary search.

The results are reported in Tables~\ref{tab:main_Lawrence},~\ref{tab:main_Taillard}, and~\ref{tab:main_Generalized}. For the constructive heuristics, we report the results of the best-performing heuristic identified in Section~\ref{sec:warm_start}, denoted by $H$. For the GA, we consider the standard GA, initialized entirely with randomly generated chromosomes and denoted by $GA$, and a GA whose initial population is seeded with heuristic solutions, denoted by $GA_h$. Its initial population includes the chromosomes generated by the heuristics in Section~\ref{sec:warm_start}, with the remaining chromosomes generated randomly.

Solution quality is evaluated using the relative percentage deviation (RPD) from the best available lower bound $BestLB$: $ \mathrm{RPD}=\frac{C_{\max}-BestLB}{BestLB}\times100$, standard deviation in percentage ($\sigma$), and mean CPU time in seconds (Time).

The results reported in Tables~\ref{tab:main_Lawrence},~\ref{tab:main_Taillard}, and~\ref{tab:main_Generalized} show that the genetic algorithms generally outperform the constructive heuristic. For the Lawrence (resp. Taillard) instances, the average RPD decreases from $31.78\%$ (resp. $34.40\%$) to $17.42\%$ (resp. $22.64\%$) at $p=0.2$, from $19.08\%$ (resp. $40.71\%$) to approximately $9\%$ (resp. $29.43\%$) at $p=0.5$, and from $1.17\%$ (resp. $1.77\%$) to $0.90\%$ (resp. $1.21\%$) at $p=0.8$. For the Generalized job shop instances, the improvement is more modest, with the average RPD decreasing from $59.89\%$ to $56.40\%$ at $p=0.2$, from $48.10\%$ to $42.30\%$ at $p=0.5$, and from $9.92\%$ to $8.74\%$ at $p=0.8$.

\begin{table}[H]
\centering
\small
\setlength{\tabcolsep}{3.5pt}
\caption{Results on Lawrence-derived instances.}
\label{tab:main_Lawrence}
\begin{tabular}{c r r r rrr rrr rrr}
\toprule
 & & & & \multicolumn{3}{c}{$H$} & \multicolumn{3}{c}{$GA$} & \multicolumn{3}{c}{$GA_h$} \\
\cmidrule(lr){5-7} \cmidrule(lr){8-10} \cmidrule(lr){11-13}
$p$ & $n$ & $m$ & \#Inst & RPD & $\sigma$ & Time & RPD & $\sigma$ & Time & RPD & $\sigma$ & Time \\
\midrule
0.2 & 10 & 5 & 5 & 21.37 & 16.26 & 0.00 & \textbf{7.56} & 9.63 & 3.71 & 7.77 & 9.87 & 3.86 \\
 & 10 & 10 & 5 & 29.02 & 35.01 & 0.00 & 22.55 & 30.99 & 2.67 & \textbf{22.46} & 30.88 & 2.53 \\
 & 15 & 5 & 5 & 59.14 & 36.76 & 0.00 & 32.93 & 27.72 & 3.13 & \textbf{32.63} & 27.50 & 3.20 \\
 & 15 & 10 & 5 & 33.66 & 5.59 & 0.00 & \textbf{17.20} & 3.61 & 16.46 & 17.21 & 3.68 & 15.37 \\
 & 15 & 15 & 5 & 18.58 & 15.89 & 0.00 & \textbf{6.07} & 7.23 & 21.87 & 6.31 & 7.51 & 23.47 \\
 & 20 & 5 & 5 & 21.90 & 27.12 & 0.00 & \textbf{13.38} & 24.84 & 8.09 & 13.50 & 25.10 & 8.37 \\
 & 20 & 10 & 5 & 31.43 & 22.38 & 0.00 & \textbf{14.81} & 11.97 & 29.62 & 14.89 & 11.87 & 32.27 \\
 & 30 & 10 & 5 & 39.19 & 16.09 & 0.00 & \textbf{24.82} & 17.54 & 70.03 & 25.19 & 17.08 & 71.74 \\
\addlinespace[4pt]
0.5 & 10 & 5 & 5 & 16.89 & 8.68 & 0.00 & 6.13 & 8.23 & 3.52 & \textbf{6.06} & 8.18 & 3.99 \\
 & 10 & 10 & 5 & 14.88 & 3.41 & 0.00 & 4.51 & 4.08 & 7.27 & \textbf{4.49} & 4.00 & 7.21 \\
 & 15 & 5 & 5 & 16.08 & 4.90 & 0.00 & 4.50 & 5.26 & 9.35 & \textbf{4.44} & 5.16 & 9.41 \\
 & 15 & 10 & 5 & 9.34 & 2.97 & 0.00 & \textbf{1.12} & 1.22 & 19.59 & 1.18 & 1.10 & 17.67 \\
 & 15 & 15 & 5 & 14.46 & 7.26 & 0.00 & \textbf{5.37} & 5.87 & 29.40 & 5.50 & 5.97 & 31.62 \\
 & 20 & 5 & 5 & 22.40 & 10.85 & 0.00 & 13.49 & 8.17 & 20.08 & \textbf{13.42} & 8.15 & 20.37 \\
 & 20 & 10 & 5 & 23.96 & 10.15 & 0.00 & \textbf{15.00} & 10.19 & 45.49 & 15.13 & 10.32 & 44.74 \\
 & 30 & 10 & 5 & 34.65 & 8.28 & 0.00 & 21.92 & 9.44 & 108.27 & \textbf{21.88} & 9.44 & 109.55 \\
\addlinespace[4pt]
0.8 & 10 & 5 & 5 & 0.91 & 0.96 & 0.00 & \textbf{0.70} & 0.76 & 2.55 & \textbf{0.70} & 0.76 & 2.56 \\
 & 10 & 10 & 5 & \textbf{0.69} & 0.91 & 0.00 & \textbf{0.69} & 0.91 & 4.16 & \textbf{0.69} & 0.91 & 4.17 \\
 & 15 & 5 & 5 & 0.94 & 0.96 & 0.00 & \textbf{0.67} & 0.69 & 4.48 & \textbf{0.67} & 0.69 & 4.48 \\
 & 15 & 10 & 5 & 1.14 & 1.48 & 0.00 & \textbf{1.00} & 1.41 & 7.28 & \textbf{1.00} & 1.41 & 7.34 \\
 & 15 & 15 & 5 & 0.42 & 0.63 & 0.00 & \textbf{0.30} & 0.37 & 9.93 & \textbf{0.30} & 0.37 & 10.03 \\
 & 20 & 5 & 5 & 1.62 & 0.68 & 0.00 & \textbf{1.42} & 0.89 & 11.49 & \textbf{1.42} & 0.89 & 11.54 \\
 & 20 & 10 & 5 & 1.73 & 1.53 & 0.00 & \textbf{1.11} & 0.87 & 15.34 & \textbf{1.11} & 0.87 & 15.61 \\
 & 30 & 10 & 5 & 1.89 & 0.99 & 0.00 & \textbf{1.31} & 0.73 & 41.06 & \textbf{1.32} & 0.73 & 39.92 \\
\midrule
0.2 & \textit{All} &  & 40 & 31.78 & 25.04 & 0.00 & \textbf{17.42} & 19.40 & 19.45 & 17.49 & 19.30 & 20.10 \\
0.5 & \textit{All} &  & 40 & 19.08 & 10.09 & 0.00 & \textbf{9.00} & 9.29 & 30.37 & 9.01 & 9.29 & 30.57 \\
0.8 & \textit{All} &  & 40 & 1.17 & 1.08 & 0.00 & \textbf{0.90} & 0.87 & 12.04 & \textbf{0.90} & 0.87 & 11.96 \\
\bottomrule
\end{tabular}
\end{table}

The statistical analysis confirms these differences (see Table~\ref{tab:stat_tests}). The Friedman test rejects the null hypothesis of equal performance for all benchmark--density combinations except the generalized job shop instances at \(p=0.2\). For the groups where the Friedman test is significant, the pairwise Wilcoxon signed-rank tests show that both \(GA\) and \(GA_h\) significantly outperform \(H\) after Bonferroni correction (\(\alpha=0.0167\)). Thus, the improvements observed in terms of average RPD are statistically significant for the Lawrence and Taillard instances across all densities, as well as for the Generalized instances at \(p=0.5\) and \(p=0.8\).

Conflict graph density has a consistent effect on solution quality across the three approaches. Overall, increasing $p$ from 0.2 to 0.8 reduces the average RPD, except for the Taillard instances, for which the highest RPD is observed at $p=0.5$. This trend can be attributed, at least in part, to the tighter lower bounds provided by GWMIN and GWMIN2 for denser graphs, as well as to the more restricted feasible search space in these instances.

The comparison between \(GA\) and \(GA_h\) reveals only marginal differences across the benchmark. The best RPD alternates between the two variants depending on the instance, while their average solution quality and computational times remain nearly identical across all conflict graph densities. This observation is also supported by the statistical analysis (Table~\ref{tab:stat_tests}): the pairwise Wilcoxon tests yield \(p\)-values above 0.05 for all benchmark--density combinations, indicating that the difference between \(GA\) and \(GA_h\) is not statistically significant. Although \(GA_h\) incorporates several heuristic-generated solutions into its initial population, this does not result in a systematic performance advantage over the random initialization of \(GA\). This suggests that the evolutionary operators and subsequent search process play a more significant role in determining the final solution quality than the initialization strategy.
\begin{table}[H]
\centering
\small
\setlength{\tabcolsep}{3.5pt}
\caption{Results on Taillard-derived instances.}
\label{tab:main_Taillard}
\begin{tabular}{cccrrrr rrr rrr}
\toprule
& & & & \multicolumn{3}{c}{$H$} 
& \multicolumn{3}{c}{$GA$} 
& \multicolumn{3}{c}{$GA_H$} \\
\cmidrule(lr){5-7} \cmidrule(lr){8-10} \cmidrule(lr){11-13}
$p$ & $n$ & $m$ & \#Inst. 
& RPD & $\sigma$ & Time 
& RPD & $\sigma$ & Time 
& RPD & $\sigma$ & Time \\
\midrule
0.2 & 15  & 15 & 10 & 21.54 & 13.23 & 0.00 & \textbf{11.60} & 12.05 & 19.74 & 11.71 & 12.17 & 19.74 \\
     & 20  & 15 & 10 & 31.23 & 12.26 & 0.00 & 18.80 & 9.27  & 44.27 & \textbf{18.60} & 9.25  & 45.86 \\
     & 20  & 20 & 10 & 27.01 & 16.62 & 0.00 & 9.91  & 7.97  & 57.09 & \textbf{9.82}  & 8.16  & 58.05 \\
     & 30  & 15 & 10 & 46.45 & 16.18 & 0.00 & 31.12 & 13.65 & 100.90 & \textbf{31.03} & 13.47 & 105.85 \\
     & 30  & 20 & 10 & 38.60 & 20.99 & 0.00 & \textbf{26.34} & 18.17 & 137.45 & 26.60 & 18.16 & 136.73 \\
     & 50  & 15 & 10 & 30.88 & 4.23  & 0.00 & 21.77 & 4.71  & 356.84 & \textbf{21.69} & 4.82  & 337.66 \\
     & 50  & 20 & 10 & 52.71 & 11.97 & 0.00 & 39.94 & 10.37 & 437.13 & \textbf{39.79} & 10.06 & 460.50 \\
     & 100 & 20 & 10 & 26.82 & 3.55  & 0.04 & 21.92 & 3.70  & 1740.13 & \textbf{21.90} & 3.60 & 1815.30 \\
\addlinespace[4pt]
0.5 & 15  & 15 & 10 & 13.60 & 9.65  & 0.00 & 7.10  & 7.63  & 28.24 & \textbf{7.03} & 7.63 & 25.58 \\
     & 20  & 15 & 10 & 17.65 & 13.24 & 0.00 & \textbf{8.71} & 9.34 & 63.95 & 8.76 & 9.34 & 63.54 \\
     & 20  & 20 & 10 & 19.21 & 7.54  & 0.00 & 9.77  & 6.92  & 86.01 & \textbf{9.75} & 6.95 & 89.97 \\
     & 30  & 15 & 10 & 34.60 & 11.01 & 0.00 & \textbf{21.15} & 9.96 & 162.26 & 21.17 & 10.02 & 161.94 \\
     & 30  & 20 & 10 & 26.94 & 7.87  & 0.00 & \textbf{17.42} & 6.60 & 240.52 & 17.43 & 6.55 & 226.46 \\
     & 50  & 15 & 10 & 53.91 & 7.47  & 0.01 & 39.74 & 7.04 & 534.12 & \textbf{39.73} & 7.17 & 528.35 \\
     & 50  & 20 & 10 & 49.20 & 7.85  & 0.02 & 36.20 & 7.87 & 727.21 & \textbf{36.16} & 7.92 & 689.59 \\
     & 100 & 20 & 10 & 110.57 & 17.96 & 0.16 & 95.65 & 18.20 & 2800.76 & \textbf{95.40} & 18.11 & 2714.30 \\
\addlinespace[4pt]
0.8 & 15  & 15 & 10 & 0.63 & 0.70 & 0.00 & \textbf{0.60} & 0.72 & 9.97 & \textbf{0.60} & 0.72 & 10.12 \\
     & 20  & 15 & 10 & 0.82 & 0.48 & 0.00 & \textbf{0.68} & 0.48 & 23.11 & \textbf{0.68} & 0.48 & 23.38 \\
     & 20  & 20 & 10 & 0.45 & 0.46 & 0.00 & \textbf{0.28} & 0.34 & 23.09 & \textbf{0.28} & 0.34 & 22.08 \\
     & 30  & 15 & 10 & 0.57 & 0.43 & 0.00 & 0.46 & 0.41 & 45.11 & \textbf{0.45} & 0.41 & 44.96 \\
     & 30  & 20 & 10 & 0.63 & 0.58 & 0.00 & \textbf{0.33} & 0.27 & 60.98 & 0.34 & 0.27 & 61.38 \\
     & 50  & 15 & 10 & 1.83 & 1.43 & 0.01 & \textbf{0.98} & 0.90 & 212.16 & \textbf{0.99} & 0.91 & 215.48 \\
     & 50  & 20 & 10 & 1.59 & 1.17 & 0.03 & 0.91 & 0.65 & 242.87 & \textbf{0.90} & 0.65 & 250.55 \\
     & 100 & 20 & 10 & 7.67 & 2.24 & 0.17 & \textbf{5.41} & 1.70 & 1592.02 & 5.42 & 1.69 & 1619.29 \\
\midrule
0.2 & \textit{All} & & 80 & 34.40 & 16.41 & 0.01 & 22.67 & 13.98 & 361.69 & \textbf{22.64} & 13.94 & 372.46 \\
0.5 & \textit{All} & & 80 & 40.71 & 31.70 & 0.02 & 29.47 & 29.27 & 580.38 & \textbf{29.43} & 29.19 & 562.47 \\
0.8 & \textit{All} & & 80 & 1.77 & 2.53 & 0.03 & \textbf{1.21} & 1.79 & 276.16 & \textbf{1.21} & 1.79 & 280.90 \\
\bottomrule
\end{tabular}
\end{table}

The variability of the obtained solutions is consistently higher for the constructive heuristic than for the two genetic algorithms across all conflict graph densities. Overall, the variability tends to decrease as the conflict graph becomes denser, although this trend is not systematic for the Taillard instances. For dense graphs ($p=0.8$), the standard deviations remain small, with values not exceeding $2.53\%$. These results indicate that the genetic algorithms provide more consistent solution quality, while stronger conflict restrictions generally reduce the variability of the obtained solutions.

\begin{table}[H]
\centering
\small
\setlength{\tabcolsep}{3.5pt}
\caption{Results on generalized job shop instances.}
\label{tab:main_Generalized}
\begin{tabular}{c r r r rrr rrr rrr}
\toprule
 & & & & \multicolumn{3}{c}{$H$} & \multicolumn{3}{c}{$GA$} & \multicolumn{3}{c}{$GA_h$} \\
\cmidrule(lr){5-7} \cmidrule(lr){8-10} \cmidrule(lr){11-13}
$p$ & $n$ & $m$ & \#Inst & RPD & $\sigma$ & Time & RPD & $\sigma$ & Time & RPD & $\sigma$ & Time \\
\midrule
0.2 & 10 & 5 & 15 & 40.97 & 28.36 & 0.00 & 30.97 & 24.98 & 2.90 & \textbf{30.78} & 24.96 & 2.88 \\
 & 20 & 10 & 15 & 32.75 & 13.66 & 0.00 & \textbf{25.94} & 14.36 & 11.03 & \textbf{25.95} & 14.57 & 10.51 \\
 & 50 & 15 & 15 & \textbf{61.88} & 22.43 & 0.00 & 62.32 & 21.73 & 75.48 & 62.33 & 21.91 & 81.23 \\
 & 100 & 20 & 15 & \textbf{103.98} & 30.88 & 0.07 & 106.58 & 29.62 & 98.43 & 106.56 & 29.47 & 99.53 \\
\addlinespace[4pt]
0.5 & 10 & 5 & 15 & 12.98 & 8.34 & 0.00 & 6.94 & 6.19 & 3.36 & \textbf{6.93} & 6.28 & 3.29 \\
 & 20 & 10 & 15 & 22.31 & 11.22 & 0.00 & \textbf{16.94} & 10.59 & 19.60 & 17.02 & 10.71 & 20.94 \\
 & 50 & 15 & 15 & 56.50 & 12.39 & 0.02 & \textbf{49.92} & 12.23 & 147.11 & 50.12 & 12.28 & 157.34 \\
 & 100 & 20 & 15 & 100.60 & 12.99 & 0.26 & \textbf{95.39} & 11.76 & 187.35 & 95.47 & 11.88 & 180.66 \\
\addlinespace[4pt]
0.8 & 10 & 5 & 15 & 2.84 & 2.37 & 0.00 & 1.81 & 1.46 & 1.89 & \textbf{1.79} & 1.45 & 1.97 \\
 & 20 & 10 & 15 & 1.66 & 1.18 & 0.00 & 0.94 & 0.74 & 6.11 & \textbf{0.92} & 0.75 & 6.33 \\
 & 50 & 15 & 15 & \textbf{6.28} & 3.60 & 0.03 & 7.42 & 3.64 & 37.41 & 7.42 & 3.70 & 47.46 \\
 & 100 & 20 & 15 & 28.88 & 6.88 & 0.31 & \textbf{24.81} & 6.34 & 470.20 & \textbf{24.81} & 6.31 & 457.63 \\
\midrule
0.2 & \textit{All} &  & 60 & 59.89 & 36.80 & 0.02 & 56.45 & 39.58 & 46.96 & \textbf{56.40} & 39.61 & 48.54 \\
0.5 & \textit{All} &  & 60 & 48.10 & 36.39 & 0.07 & \textbf{42.30} & 36.29 & 89.35 & 42.38 & 36.34 & 90.56 \\
0.8 & \textit{All} &  & 60 & 9.92 & 11.87 & 0.08 & 8.74 & 10.35 & 128.90 & \textbf{8.74} & 10.36 & 128.35 \\
\bottomrule
\end{tabular}
\end{table}

\begin{table}[htbp]
\centering \small
\caption{Friedman test and unadjusted pairwise Wilcoxon signed-rank $p$-values for makespan RPD across benchmarks and conflict-graph densities. Post-hoc conclusions use the Bonferroni-adjusted threshold $\alpha=0.0167$ within each benchmark--density group.}
\label{tab:stat_tests}
\begin{tabular}{ll rr ccc}
\toprule
Benchmark & $p$ & $\chi^2$ & $p_{\text{Fried.}}$ & $H$ vs.\ GA & $H$ vs.\ GA$_h$ & GA vs.\ GA$_h$ \\
\midrule
Lawrence-derived & 0.2 & 63.21 & 1.9e-14$^{***}$ & 1.8e-12$^{***}$ & 1.8e-12$^{***}$ & 0.4540 \\
 & 0.5 & 60.00 & 9.4e-14$^{***}$ & 1.8e-12$^{***}$ & 1.8e-12$^{***}$ & 0.8995 \\
 & 0.8 & 27.55 & 1.0e-06$^{***}$ & 6.1e-05$^{***}$ & 6.1e-05$^{***}$ & -- \\
\addlinespace[3pt]
Taillard-derived & 0.2 & 122.79 & 2.2e-27$^{***}$ & 7.8e-15$^{***}$ & 7.8e-15$^{***}$ & 0.2765 \\
 & 0.5 & 118.90 & 1.5e-26$^{***}$ & 1.1e-14$^{***}$ & 1.1e-14$^{***}$ & 0.5617 \\
 & 0.8 & 88.43 & 6.3e-20$^{***}$ & 1.6e-10$^{***}$ & 1.1e-10$^{***}$ & 0.6225 \\
\addlinespace[3pt]
Generalized job shop & 0.2 & 5.33 & 0.0695 & 0.0040 & 0.0037 & 0.3200 \\
 & 0.5 & 75.16 & 4.8e-17$^{***}$ & 2.8e-11$^{***}$ & 3.3e-11$^{***}$ & 0.0731 \\
 & 0.8 & 20.92 & 2.9e-05$^{***}$ & 3.7e-04$^{***}$ & 3.4e-04$^{***}$ & 0.3104 \\
\midrule
\textit{All benchmarks} & 0.2 & 154.41 & 3.0e-34$^{***}$ & 1.0e-25$^{***}$ & 9.0e-26$^{***}$ & 0.3158 \\
 & 0.5 & 252.26 & 1.7e-55$^{***}$ & 4.9e-31$^{***}$ & 5.1e-31$^{***}$ & 0.5147 \\
 & 0.8 & 121.39 & 4.4e-27$^{***}$ & 1.3e-12$^{***}$ & 9.9e-13$^{***}$ & 0.5704 \\
\bottomrule
\end{tabular}
\end{table}

\subsubsection{Solution quality distribution}

Figure~\ref{fig:boxplots} displays the RPD distribution for each
method across all benchmarks, separated by density.  The boxes
span the interquartile range (IQR); whiskers extend to
1.5$\times$IQR; points beyond the whiskers are outliers.

\begin{figure}[htbp]
\centering
\includegraphics[width=\textwidth]{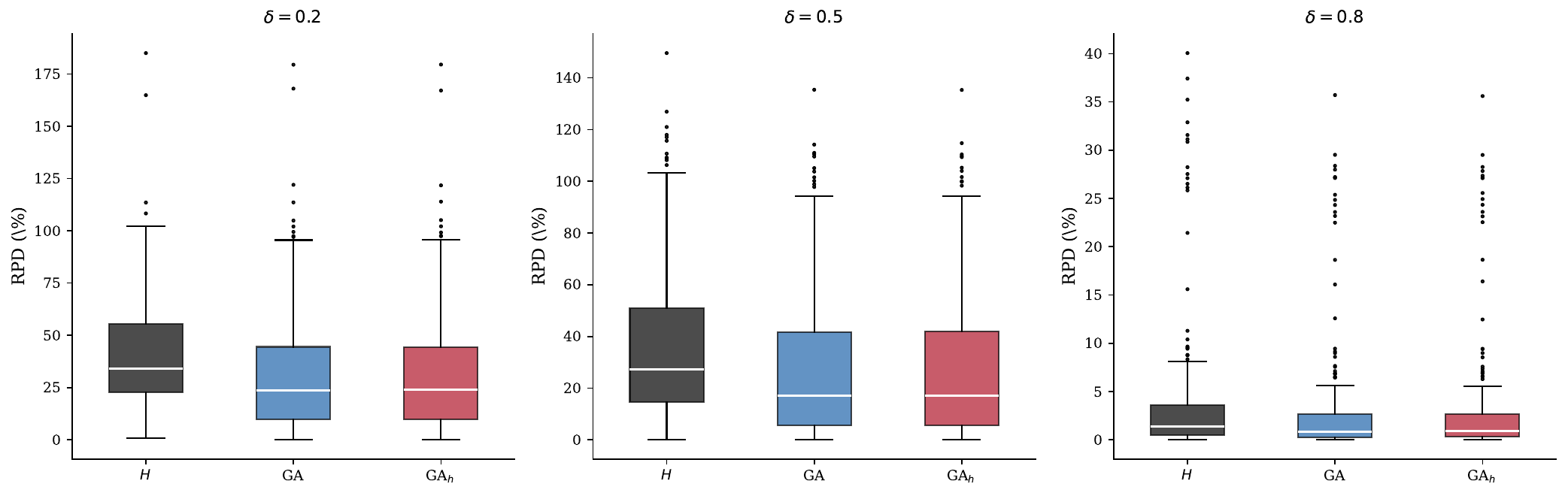}
\caption{Box plots of RPD (\%) by method and density across all benchmarks.}
\label{fig:boxplots}
\end{figure}

The box plots reveal clear distributional differences. At $p=0.2$, $H$ has both a higher median and a wider interquartile range than $GA$ and $GA_h$, indicating greater sensitivity of the constructive heuristic to instance structure. The metaheuristic variants produce tighter distributions, suggesting that the evolutionary search has a stabilising effect. At $p=0.8$, all three boxes collapse to a narrow band near zero, with very few outliers, consistent with the constrained nature of dense conflict graphs. The outliers visible at low and medium densities typically correspond to the largest instances (e.g., $100\times20$), where the gap to the lower bound is inherently large.

\subsubsection{Average RPD by benchmark}

\begin{figure}[htbp]
\centering
\includegraphics[width=\textwidth]{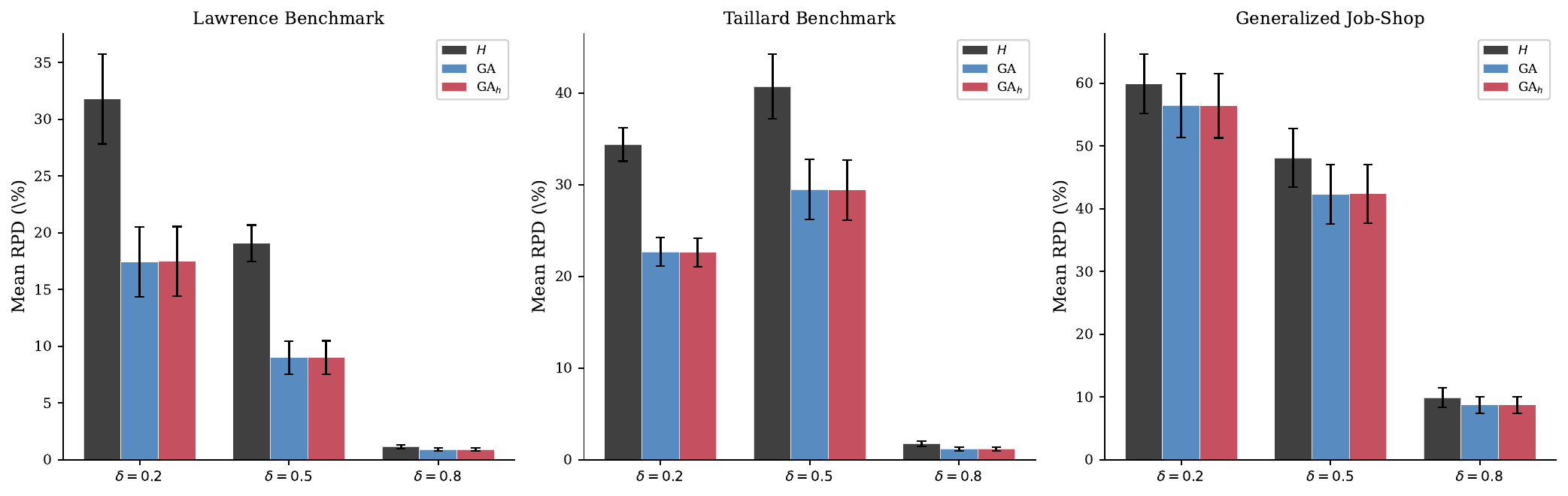}
\caption{Mean RPD (\%) with standard-error bars by benchmark and density.}
\label{fig:barchart}
\end{figure}

Figure~\ref{fig:barchart} decomposes the average RPD by benchmark family and density, providing a direct visual comparison of the relative difficulty of each benchmark. The Generalized job shop instances consistently yield the tallest bars, confirming their greater difficulty across all density levels, while the Lawrence instances are the easiest. Within each benchmark group, $H$ bars are systematically taller than the $GA$ and $GA_h$ bars at $p=0.2$ and $p=0.5$. At $p=0.8$, the three bars are nearly equal, consistent with the convergence observed in the tables. The standard-error bars are visibly wider for $H$ than for the GA variants, reinforcing the conclusion that metaheuristics produce more predictable solution quality.

\section{Conclusions}\label{sec:conclusion}

This paper addressed the job shop scheduling problem under conflict constraints represented by a simple undirected graph. We first established the relationship between the considered problem and the job shop scheduling problem with resource constraints. We then established its NP-hardness for two machines and proved that, for two machines, unit-time operations, and $m_j \leq 2$ for each job $J_j$, the problem can be solved in polynomial time for complements of complete split graphs. We also presented four mathematical formulations based on precedence-based and time-indexed representations and derived lower bounds for the makespan. Since the problem is NP-hard, we proposed a genetic algorithm and tuned its components on a wide range of instances derived from the Lawrence and Taillard benchmarks, as well as randomly generated generalized job shop instances in which a job may have more than one operation on a machine. Future research could investigate exact algorithms, including column-generation-based approaches~\cite{tellache2024linear} and branch-and-cut algorithms~\cite{tellache2025variant}, as well as the online version of the problem.
	
	\bibliographystyle{unsrtnat}
	\bibliography{biblio}
	
\end{document}